\theoremstyle{definition}
\newtheorem{definition}{Definition}
\newtheorem{theorem}{Theorem}
\newtheorem{lemma}{Lemma}
\newtheorem{example}{Example}
\theoremstyle{remark}
\newcommand{\refe}[1] {(\ref{#1})}
\def\dep {\hspace*{-\parindent}}
\def \RR {\mathbb{R}}
\def \NN {\mathbb{N}}
\def \ZZ {\mathbb{Z}}
\def \cP {\mathcal{P}}
\def \cS {\mathcal{S}}
\def \supp {\mathrm{supp}}
\def \0{{\mathbf{0}}}
\def \zero {\mathbb{O}}  
\newcommand{\pp}[1] { \left( {#1} \right)  }
\newcommand{\pr}[1] { \left[ {#1} \right]  }
\begin{document}

\title{On the decomposition of Generalized Additive Independence models}
\author{Michel GRABISCH${}^1$\thanks{Corresponding author.} and Christophe LABREUCHE${}^2$\\
\normalsize ${}^1$ Paris School of Economics, University of Paris I\\
\normalsize 106-112, Bd de l'H\^opital, 75013 Paris, France\\
\normalsize \tt michel.grabisch@univ-paris1.fr\\
\normalsize ${}^2$ Thales Research and Technology\\
\normalsize 1, Avenue Augustin Fresnel, 91767 Palaiseau, France\\
\normalsize \tt christophe.labreuche@thalesgroup.com}

\date{}
\maketitle

\begin{abstract}
The GAI (Generalized Additive Independence) model proposed by Fishburn is a
generalization of the additive utility model, which need not satisfy mutual
preferential independence. Its great generality makes however its application and study
difficult. We consider a significant subclass of GAI models, namely the discrete
2-additive GAI models, and provide for this class a decomposition into
nonnegative monotone terms. This decomposition allows a reduction from
exponential to quadratic complexity in any optimization problem involving
discrete 2-additive models, making them usable in practice.
\end{abstract}
{\bf Keywords:} multiattribute utility theory, capacity, generalized additive
independence, multichoice game

\section{Introduction}
The theory of multiattribute utility (MAUT) provides an adequate and widely
studied framework for the representation of preferences in decision making with
multiple objectives or criteria (let us mention here only the classic works of
\cite{kera76}, and \cite{krlusutv71} on conjoint measurement, among numerous
other ones). The most representative models in MAUT are the additive utility
model $U(x)=\sum_i u_i(x_i)$, and the multiplicative model (see
\cite{dysa79}), whose characteristic property is the (mutual) preferential
independence, stipulating that the preference among two alternatives should not
depend on the attributes where the two alternatives agree (see \cite{absu15} for
a detailed study on MAUT models satisfying preferential independence).

However, it is well known that in real situations, preferential independence
could be easily violated, because of the possible interaction between
objective/criteria. Referring to the example of evaluation of students in
\cite{gra95a} where students are evaluated on three subjects like mathematics,
physics and language skills, the preference between two students may be inverted
depending on their level in mathematics, assuming that the evaluation policy pays
attention to scientific subjects. For instance, the following preference
reversal is not unlikely (marks are given on a 0-100 scale, in the following
order: mathematics, physics and language skills): $(40,90,60)\succ (40,60,90)$
and $(80,90,60)\prec(80,60,90)$, because if a student is weak in one of the
scientific subject (e.g., 40 in mathematics), more attention is paid to the
other scientific subject (here, physics), otherwise more attention is paid to
language skills.

\medskip

To escape preferential independence, \cite{krlusutv71} have proposed the
so-called \textit{decomposable model}, of the form $U(x)=F(u_1(x_1),\ldots,
u_n(x_n))$, where $F$ is strictly monotone. This model, which is a
generalization of the additive utility model, is characterized by a much weaker
property than preferential independence, namely \textit{weak independence} or
\textit{weak separability} (\cite{wak89}). This property amounts to
requiring preferential independence only for one attribute versus the others,
and is generally satisfied in practice. Taking $F$ as the Choquet integral
w.r.t. a capacity (\cite{cho53}) permits to have a versatile model,
which has been well studied and applied in practice (see a survey in
\cite{grla07b}). The drawback of these models is that in general they require
commensurate utility functions, i.e., one should be able to compare $u_i(x_i)$
with $u_j(x_j)$ for every distinct $i,j$.

Another generalization of the additive utility model escaping preferential
independence has been proposed by \cite{fis67}, under the name of
\textit{generalized additive independence (GAI)} model. It has the general form
$U(x) = \sum_{S\in \mathcal {S}}u_S(x_S)$, where $\cS$ is any collection of
  subsets of attributes, and $x_S$ is the vector of components of $x$ belonging
  to $S$. This model is very general (it even need not satisfy weak
  independence, see below for an example) and does not need commensurate attributes. 

Its great generality is also the Achille's heel of this model, making it
difficult to use in practice, and so far it has not been so much considered in
the MAUT community. Some developments, essentially focused on the identification
of the parameters of the model, have been done in the field of artificial
intelligence (see, e.g., \cite{bagr95,bobabr01,bifameza12}).  There are two
major difficulties related to this model.

Firstly, its expression is far from being unique. In two equivalent
decompositions $U(x) = \sum_{S\in \mathcal {S}}u_S(x_S) = \sum_{S\in \mathcal
  {S}}u'_S(x_S)$, the utility functions $u_S$ and $u'_S$ may behave completely
differently and in particular be governed by different monotonicity conditions.
This implies that there is no intrinsic semantics attached to these utility
terms, which makes the model difficult to interpret for the decision maker.

The second difficulty is related to its elicitation,
because the
number of monotonicity constraints on the parameters of the model grows
exponentially fast in the number of attributes.  As these constraints must be
enforced, the practical identification of the model appears to be rapidly
computationally intractable as the number of attributes and the cardinality of
the attributes grow.

\medskip

The aim of this paper is to provide a first step in making GAI models usable in
practice, by proving a fundamental result on decomposition, in a subclass of GAI
models which is significant for applications. Specifically, we are interested
in GAI models  where, first, the collection
$\cS$ is made only of singletons and pairs, thus limiting the model to a sum of
univariate or bivariate terms, and second, the attributes take discrete
values. We call this particular class \textit{2-additive discrete GAI}
models. In addition, we assume that weak independence holds.


The main result of this paper shows that for a given 2-additive GAI model that
fulfills weak independence, it is always possible to obtain a decomposition into
nonnegative monotone nondecreasing terms.  The result is proved by using an
equivalence between 2-additive discrete GAI models and 2-additive $k$-ary
capacities, and amounts to finding the set of extreme points of the polytope of
2-additive $k$-ary capacities.  Going back to the first difficulty mentioned
earlier, using this decomposition provides a semantics to the utility terms
$u_S$ as they have the same monotonicity as the overall utility $U$.  Secondly,
thanks to this result, it is possible to replace the monotonicity conditions on
$U$ by monotonicity conditions on each term $u_S$, which reduces the number of
monotonicity constraints from exponential to quadratic complexity.  This is of
extreme importance in practice.

\medskip

The paper is organized as follows. Section~\ref{sec:back} introduces the
necessary concepts and notation in multiattribute utility, capacities, $k$-ary
capacities, and GAI models. Section~\ref{sec:equiv} introduces $p$-additive GAI
models, and shows the equivalence with $p$-additive $k$-ary
capacities. Section~\ref{sec:main} explains the complexity problem behind the
identification of 2-additive discrete GAI models, and proves that a
decomposition into nonnegative monotone nondecreasing terms is always possible,
which constitutes the main result of the paper. 

\section{Background}\label{sec:back}

\subsection{Multi-Attribute Utility Theory}
\label{sec:maut}
We consider $n$ attributes $X_1,\ldots,X_n$, letting $N=\{1,\ldots,n\}$ be its
index set. Alternatives are represented by a vector $x=(x_1,\ldots,x_n)$ in
$X=X_1\times\cdots\times X_n$.  We denote by $(x_A,y_{-A}) \in X$ the compound alternative
taking value $x_i$ if $i\in A$ and value $y_i$ otherwise.

One of the leading model in decision theory is Multi-Attribute Utility Theory
\citep{kera76}.  The overall utility $U:X\rightarrow \RR$ representing the
preference relation $\succcurlyeq$ of a decision maker (i.e. $x \succcurlyeq y$
iff $U(x) \geq U(y)$) is then supposed to satisfy \emph{preferential
  independence}, whereby the comparison between two alternatives does not depend
on the attributes having the same value.  Accordingly, $U$ can take the form of
either an additive model $U(x)=\sum_{i\in N} k_i \: u_i(x_i)$, or a  multiplicative
form $1-k\:U(x) = \prod_{i\in N} (1-k\:k_i\: u_i(x_i))$, where $u_i$ is a
marginal utility function over attribute $X_i$.

As we explained in the introduction, preferential independence is quite a strong
condition which is not always met in practice.  A weaker condition is \emph{weak
  independence} where for all $i \in N$, all $x_i,y_i \in X_i$ and all
$z_{-i},t_{-i}\in X_{-i}$
\[ (x_i,z_{-i}) \succcurlyeq (y_i,z_{-i})  \quad \Longleftrightarrow \quad (x_i,t_{-i}) \succcurlyeq (y_i,t_{-i})
\]
($x_i$ is at least as good as $y_i$ \textit{ceteris paribus}). Under this
condition, we can define a preference relation $\succcurlyeq_i$ on a single
attribute $X_i$ as follows: for all $x_i,y_i \in X_i$
\[ x_i \succcurlyeq_i y_i \quad \mbox{ iff } \quad (x_i,z_{-i}) \succcurlyeq
(y_i,z_{-i}), 
\]
for some $z_{-i}\in X_{-i}$.  

\subsection{Generalized Additive Independence (GAI) model}

The additive utility model $\sum_{i\in N} u_i(x_i)$ can be easily generalized by considering marginal utility functions over subsets of attributes, with potential overlap between the subsets \citep{fis67,bagr95}:
\begin{equation}\label{eq:gai}
U(x) = \sum_{S\in \cS} u_S(x_S)\qquad (x\in X),
\end{equation}
where $\cS\subseteq 2^N\setminus\{\emptyset\}$.  This model is called the
\emph{Generalized Additive Independence} (GAI) model.  It is characterized by a
condition stating that if two probability distributions $P$ and $Q$ over the
alternatives $X$ have the same marginals over every $S\in \cS$, then the
expected utility of $P$ and $Q$ are equal.
The additive utility model is a particular case of the GAI model when
$\mathcal{S}$ is composed of singletons only.
  
Unlike the additive utility model or the multiplicative model, the GAI model does
not necessarily satisfy weak independence.  In the Artificial Intelligence
community, researchers are interested in the representation of preferences that
may violate weak independence.  A well-known example of such a preference is the
following: consider two attributes $X_1,X_2$ where $X_1$ pertains on the type of
wine and $X_2$ to the type of main course in a restaurant. Then usually, one
prefers `red wine' to `white wine' if the main course is `meat', but `white
wine' is preferred to `red wine' if the main
course is `fish' (the preference over attribute `wine' is conditional on the
value on attribute `main course') \citep{bobabr01}.

In this work, we follow a more traditional view of Decision Theory and assume
that weak independence holds, which is the case in most of the decision
problems.

We make the following two assumptions:
\begin{itemize}
\item {\bf Assumption 1:} Monotonicity:
\[
\forall i\in N, x_i\succcurlyeq_i y_i\Rightarrow U(x)\geq U(y)
\] 
\item {\bf Assumption 2:} Boundaries: each $X_i$ is bounded, in the sense that
  there exist $x_i^\top, x_i^\bot\in X_i$ which are the best and worst elements
  of $X_i$ according to $\succcurlyeq_i$, and
\[
U(x_i^\top,\ldots,x_n^\top) = 1, \quad U(x_i^\bot,\ldots,x_n^\bot) = 0.
\]
\end{itemize}

\subsection{Non-uniqueness of the GAI decomposition}
\label{S2.3}

In the additive utility model, considering two possible decompositions 
$U(x)= \sum_{i\in N} u_i(x_i) = \sum_{i\in N} u'_i(x_i)$, $u_i$ and $u'_i$ are equal up to a constant \citep{fis65},
so that all admissible utility functions satisfy the same monotonicity
(for any two $x_i,y_i \in X_i$, we have $u_i(x_i) \geq u_i(y_i)$ iff $u'_i(x_i) \geq u'_i(y_i)$).

Concerning the GAI model, taking two equivalent decompositions $U(x) =
\sum_{S\in\mathcal{S}} u_S(x_S) = \sum_{S\in\mathcal{S}} u'_S(x_S)$, they are
related by \citep{fis67}
\[ u'_S(x_S) = u_S(x_S) + \sum_{S'\in \mathcal{S} \setminus \{S\},\, S\cap S'\neq \emptyset} f_{S,S'}(x_{S\cap S'}) + c_S
\]
where $f_{S,S'}: X_{S\cap S'} \rightarrow \RR$, and $ \sum_{S\in \mathcal{S}}
\pr{ \sum_{S'\in \mathcal{S} \setminus \{S\},\, S\cap S'\neq \emptyset}
  f_{S,S'}(x_{S\cap S'}) + c_S } = 0 $.  Due to the presence of functions
$f_{S,S'}$, we do not have $u_S(x_S) \geq u_S(y_S)$ iff $u'_S(x_S) \geq
u'_S(y_S)$, for any two $x_S,y_S \in X_S$ \cite[page 87]{bra12}.
Moreover, even if $U$ satisfies weak independence, it might be the case that
$u_S$ does not fulfill this condition, or satisfies it but does not have the
same monotonicity as $U$.  Hence {\it there is no well-defined semantics of the
  utility functions $u_S$}, contrarily to what is claimed in \cite[section
  3.2.1.4]{bra12}.

Braziunas proposes a decomposition based on the Fishburn representation. Fixing
an order on $\mathcal{S}$, say, $\mathcal{S} = \{S_1,\ldots, S_p\}$, the overall
utility reads $U(x)=\sum_{S\in \mathcal{S}} u_S^C(x_S)$ with, for every $j\in
\{1,\ldots,p\}$
\begin{equation}
 u^C_{S_j}(x_{S_j}) = U(x[S_j]) + \sum_{K \subseteq \{1,\ldots,j-1\} \,,\: K \not= \emptyset} (-1)^{|K|}
    U\pp{ x\pr{ \cap_{k\in K} S_k \cap S_j }}
\label{Ebraz1}
\end{equation}
where $\cdot^C$ stands for ``canonical'', $\zero\in X$ is any element in $X$ seen as an
anchor, and $x[S] \in X$ defined by $(x[S])_i=x_i$ if $i\in S$ and
$(x[S])_i=\zero_i$ otherwise \cite[page 94]{bra12}).  Note that the
expression depends on the chosen ordering of the elements of $\mathcal{S}$.

\begin{example}
Consider the following function $U(x_1,x_2,x_3)=x_2+x_1\: x_3 + \max(x_1,x_2)$.
We have $\mathcal{S}=\{S_1,S_2,S_3\}$ with $S_1=\{2\}$, $S_2=\{1,3\}$ and $S_3=\{1,2\}$.
Then the canonical decomposition gives, with $\zero=(0,0,0)$:
\begin{align*}
 & u_{S_1}^C(x_2) = U(x[S_1])=U(\zero_1,x_2,\zero_3)=2\: x_2 \\
 & u_{S_2}^C(x_1,x_3)=U(x[S_2])-U(x[S_1\cap S_2])=U(x_1,\zero_2,x_3)-U(\zero)=x_1\: (x_3+1) \\
 & u_{S_3}^C(x_1,x_2) = U(x[S_3])-U(x[S_1\cap S_3])-U(x[S_2\cap S_3])+U(x[S_1 \cap S_2 \cap S_3]) \\
  & \quad   = U(x_1,x_2,\zero_3) - U(\zero_1,x_2,\zero_3) - U(x_1,\zero_2,\zero_3) + U(\zero) \\
	& \quad 	=\max(x_1,x_2)-x_1-x_2=-\min(x_1,x_2)
\end{align*}
We note that $U$ is nondecreasing in all variables, even though, for the canonical decomposition, $u_{S_3}^C$ is nonincreasing in its two coordinates.
\label{Ex1}
\end{example}

\subsection{Capacities and k-ary capacities}
We consider a finite set $N=\{1,\ldots,n\}$ (e.g., the index set of attributes
as in Section~\ref{sec:maut}). A \textit{game} on $N$ is a set function
$v:2^N\rightarrow \RR$ vanishing on the empty set. A game $v$ is
\textit{monotone} if $v(S)\leq v(T)$ whenever $S\subseteq T$. Note that monotone
games take nonnegative values, and if in addition $v(N)=1$, the game is said to
be \textit{normalized}. In the sequel, we will mainly deal with monotone
normalized games, which are usually called \textit{capacities}
\citep{cho53}\footnote{Often capacities are defined as monotone games, not
  necessarily normalized.}.

\medskip

Making the identification of sets with their characteristic functions, i.e.,
$S\leftrightarrow 1_S$ for any $S\in 2^N$, with $1_S:N\rightarrow \{0,1\}$,
$1_S(i)=1$ iff $i\in S$, games can be seen as functions on the set of binary
functions. A natural generalization is then to consider functions taking values
in $\{0,1,\ldots,k\}$, leading to the so-called multichoice or $k$-choice games
\citep{hsra90} and $k$-ary capacities \citep{grla03b}. Formally, a
\textit{$k$-choice game} is a mapping $v:\{0,1,\ldots,k\}^N\rightarrow \RR$
satisfying $v(0,\ldots,0)=0$. A \textit{$k$-ary capacity} is a $k$-choice game
being monotone and normalized, i.e., satisfying $v(y)\leq v(z)$ whenever $y\leq
z$, and $v(k,\ldots, k)=1$.

\medskip

Let $v:2^N\rightarrow \RR$ be a game. The \textit{M\"obius transform} of $v$
(a.k.a. \textit{M\"obius inverse}) is the set function $m^v:2^N\rightarrow \RR$
which is the (unique) solution of the linear system
\[
v(S) = \sum_{T\subseteq S}m^v(T) \qquad (S\in 2^N)
\]
(see \cite{rot64}). It is given by
\begin{equation}\label{eq:mob}
m^v(S) = \sum_{T\subseteq S}(-1)^{|S\setminus T|}v(T) \qquad (S\in 2^N).
\end{equation}
A capacity $v$ is said to be \textit{(at most) $p$-additive} for some
$p\in\{1,\ldots,n\}$ if its M\"obius transform vanishes for subsets of more than
$p$ elements: $m^v(S)=0$ for all $S\subseteq N$ such that $|S|>p$.

Similarly, given a $k$-ary game $v$, its M\"obius transform is defined as
the unique solution of the linear system
\begin{equation}\label{eq:zeta}
v(z) = \sum_{y\leq z}m^v(y) \qquad (z\in \{0,1,\ldots,k\}^N).
\end{equation}
It is shown in the appendix that its solution is given by
\begin{equation}\label{eq:mobk}
m^v(z) = \sum_{y\leq z \ : \ z_i-y_i\leq 1\forall i\in N}(-1)^{\sum_{i\in
    N}(z_i-y_i)}v(y) \qquad (z\in\{0,1,\ldots,k\}^N).
\end{equation}
It follows that any $k$-ary game $v$ can be written as:
\[
v = \sum_{x\in L^N}m^v(x)u_x,
\]
with $u_x$ a $k$-ary capacity defined by
\[
u_x(z) = \begin{cases}
  1, & \text{if } z\geq x\\
  0, & \text{otherwise.}
  \end{cases}
\]
By analogy with classical games, $u_x$ is called the \textit{unanimity game}
centered on $x$. Note that this decomposition is unique as the unanimity games
are linearly independent, and form a basis of the vector space of $k$-ary games.

\section{Relation between GAI and $k$-ary capacities}\label{sec:equiv}

\subsection{Discrete GAI models are $k$-ary capacities}
We consider discrete GAI models, i.e., where attributes can take only a finite
number of values, and show that they are particular instances of $k$-ary
capacities. We put
\[
X_i=\{a_i^0,\ldots,a_i^{m_i}\} \qquad (i\in N),
\]
with $a_i^0\preccurlyeq_i\cdots\preccurlyeq_i a_i^{m_i}$. Any alternative $x\in
X$ is mapped to $\{0,\ldots,m_1\}\times\cdots\times\{0,\ldots,m_n\}$ by the
mapping $\varphi$ which simply keeps track of the rank of the value of the
attribute:
\[
(a_1^{j_1},\ldots, a_n^{j_n}) \mapsto \varphi(a_1^{j_1},\ldots, a_n^{j_n})   =
(j_1,\ldots, j_n).
\]
We consider now the smallest (discrete) hypercube $\{0,\ldots,k\}^N$ containing
$\{0,\ldots,m_1\}\times\cdots\times\{0,\ldots,m_n\}$, with $k:=\max_im_i$. Given
a GAI model $U$ with discrete attributes as described above, we define the
mapping $v:\{0,\ldots,k\}^N\rightarrow \RR$ by
\[
U(x) =: v(\varphi(x)) \qquad (x\in X)
\]
and let $v(z):=v(m_1,\ldots, m_n)$ when $z\in \{0,\ldots,k\}^N\setminus\varphi(X)$.
In words, $v$ encodes the values of $U$ for every alternative, and fills in the
missing values in the hypercube by the maximum of $U$. By assumption 1 and 2 on
$U$, it follows that $v$ is a normalized $k$-ary capacity on $N$. 

From now on, we put $L=\{0,1,\ldots, k\}$.

\subsection{$p$-additive GAI models}
Consider a GAI model $U$ on $X$, where the attributes need not be discrete. As
$U$ is in general exponentially complex in the number of attributes, one is
looking for  simple particular cases. The simplest case would be to consider a
classical additive model. The characteristic property of an additive model is
that the variation of $U$ in one attribute is unrelated to the value of the
other fixed ones:
\[
U(y_i,x_{-i}) - U(x_i,x_{-i}) = u_{\{i\}}(y_i) - u_{\{i\}}(x_i).
\]
Calling the left member the (1st order) variation of $U$ w.r.t. $i$ from $x_i$
to $y_i$ at $x$, we define inductively the \textit{variation of $U$
  w.r.t. $P\subseteq N$ from $x_P$ to $y_P$ at $x$} by
\[
\Delta_{x_P}^{y_P} U(x)  =\sum_{T\subseteq P} (-1)^{|P\setminus
  T|}U(y_T,x_{P\setminus T},x_{-P}) 
\] 
For example, one has, abbreviating $\{i,j\}$ by $ij$:
\begin{align*}
\Delta_{x_i}^{y_i}U(x)  &= U(y_i,x_{-i}) - U(x_i,x_{-i}) \\
\Delta_{x_{ij}}^{y_{ij}}U(x) &= U(y_{ij}, x_{-ij}) - U(x_i,y_j,x_{-ij}) -
U(y_i,x_j,x_{-ij}) + U(x).
\end{align*}
\begin{definition}
A function $U$ on $X$ is said to be \textit{$p$-additive} for some
$p\in\{1,\ldots, n\}$ if for every $P\subseteq N$ with $|P|\leq p$, for every
$x\in X$, $x_P,y_P\in X_P$ and $x'_{-P}\in X_{-P}$, 
\[
\Delta_{x_P}^{y_P}U(x_P,x_{-P}) =\Delta_{x_P}^{y_P}U(x_P,x'_{-P}). 
\]
\end{definition}
The above definition generalizes the notion of 2-additivity proposed in
\cite{lagr13}.

The next theorem relates $p$-additivity to the decomposition of $U$ into terms
involving at most $p$ variables, and generalizes \cite[Prop. 4]{lagr13}.
\begin{theorem}\label{th:1}
A function $U$ on $X$ is $p$-additive for some $p\in\{1,\ldots, n\}$ if and only
if  there exist functions $u_A:X_A\rightarrow \RR$, for every $A\subseteq N$ with $|A|\leq
p$, such that $U$ takes the form (\ref{eq:gai}) with $\cS=\{A\subseteq N,
0<|A|\leq p\}$. 
\end{theorem}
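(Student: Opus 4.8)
The engine of both directions is that the variation operator factorizes as a composition of commuting one-dimensional differences, $\Delta_{x_P}^{y_P} = \prod_{i \in P} \Delta_{x_i}^{y_i}$, together with two elementary facts. First, if a function $f$ does not depend on some coordinate $i \in P$, then $\Delta_{x_i}^{y_i} f = 0$, and hence $\Delta_{x_P}^{y_P} f = 0$; in particular $\Delta_{x_P}^{y_P} u_A \equiv 0$ whenever $P \not\subseteq A$, since $u_A$ ignores the coordinates outside $A$. Second, I would recast the defining condition into the equivalent form I actually use: the independence of $\Delta_{x_P}^{y_P} U$ from $x_{-P}$ at the top cardinality $|P| = p$ is equivalent to the identical vanishing of every variation of order strictly greater than $p$, i.e. $\Delta_{x_R}^{y_R} U \equiv 0$ for all $R$ with $|R| > p$. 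Indeed, writing $R = P \cup \{i\}$ with $|P| = p$, one has $\Delta_{x_R}^{y_R} U = \Delta_{x_i}^{y_i}\big(\Delta_{x_P}^{y_P} U\big)$, which vanishes precisely because $\Delta_{x_P}^{y_P} U$ does not depend on $x_i$; higher orders follow by factoring out one such $(p+1)$-subset.

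For the direction ($\Leftarrow$), suppose $U = \sum_{0 < |A| \le p} u_A(x_A)$ and fix $P$ with $|P| = p$. Applying $\Delta_{x_P}^{y_P}$ termwise and using the first fact, only the terms with $A \supseteq P$ survive; since also $|A| \le p = |P|$, the sole survivor is $A = P$, so $\Delta_{x_P}^{y_P} U = \Delta_{x_P}^{y_P} u_P$. The right-hand side is a function of $x_P$ and $y_P$ alone, hence independent of $x_{-P}$, which is exactly the $p$-additivity condition.

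For the direction ($\Rightarrow$), I would construct the terms explicitly from a fixed anchor $\zero \in X$ by the M\"obius-type formula
\[
u_A(x_A) := \sum_{B \subseteq A} (-1)^{|A \setminus B|}\, U(x[B]) = \Delta_{\zero_A}^{x_A} U(\zero) \qquad (A \subseteq N),
\]
noting that $u_A$ genuinely depends only on $x_A$. The plan then has two steps. First, prove the universal identity $U(x) = \sum_{A \subseteq N} u_A(x_A)$, valid for any $U$: substituting the definition and exchanging the order of summation gives $\sum_{B \subseteq N} U(x[B]) \sum_{A \supseteq B}(-1)^{|A \setminus B|}$, and the inner alternating sum equals $0$ unless $B = N$, leaving $U(x[N]) = U(x)$. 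Second, invoke $p$-additivity in the form established above: for $|A| > p$, choosing any $R \subseteq A$ with $|R| = p+1$ and factoring $\Delta_{\zero_A}^{x_A} = \Delta_{\zero_{A \setminus R}}^{x_{A \setminus R}} \circ \Delta_{\zero_R}^{x_R}$ shows $u_A \equiv 0$, since $\Delta_{\zero_R}^{x_R} U$ vanishes identically. Thus $U(x) = u_\emptyset + \sum_{0 < |A| \le p} u_A(x_A)$, and the leftover constant $u_\emptyset = U(\zero)$ is absorbed into any singleton term, yielding the required form with $\cS = \{A : 0 < |A| \le p\}$.

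The easy direction is ($\Leftarrow$); the work is concentrated in ($\Rightarrow$). The main obstacle is not any single hard computation but getting the two bookkeeping arguments exactly right: the alternating-sum (M\"obius inversion) proof of the universal decomposition, and the commuting-factorization argument that every variation of order exceeding $p$ vanishes. Care is also needed to confirm that each constructed $u_A$ depends only on the coordinates in $A$ and that the residual constant is harmlessly reabsorbed.
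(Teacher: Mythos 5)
Your proof is correct and follows essentially the same route as the paper: your M\"obius-type terms $u_A(x_A)=\Delta_{\zero_A}^{x_A}U(\zero)$ are exactly the paper's $m^v(A)$ for the set function $v(A)=U(x_A,0_{-A})$, your universal identity is the zeta relation $U(x)=v(N)=\sum_{A}m^v(A)$, and your vanishing argument for $|A|>p$ is the paper's computation $\Delta_{S\cup i}v(T)=0$. Like the paper's own proof, you invoke (and, in the easy direction, verify) only the $|P|=p$ instance of the defining condition, which is the reading under which the stated equivalence actually holds.
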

\begin{proof}
We suppose $p\neq n$ to discard the trivial case.
The ``if'' part is easy to check. As for the ``only if'' part, fix $x\in X$ and
define $v(A) = U(x_A,0_{-A})$ for all $A\subseteq N$. By assumptions 1 and 2,
$v$ is a (nonnormalized) capacity on $N$. Define its discrete derivative
inductively as follows. For any $\emptyset\neq S\subset N$, $T\in 2^N$ and
$i\not\in S$,
\[
\Delta_{S\cup i}v(T) = \Delta_i(\Delta_Sv(T))
\]
with $\Delta_iv(T) = v(T\cup i) - v(T)$. Then it is easy to see by
(\ref{eq:mob}) that
$\Delta_Sv(\emptyset)=m^v(S)$, and that for disjoint $S$ and $T$
\[
\Delta_Sv(T) = \Delta_{0_S}^{x_S}U(x_T,0_{-T}).
\]
Take $S$ such that $|S|=p$ and any $i\in N\setminus S$. Then for any $T\subseteq
N\setminus (S\cup i)$,
\[
\Delta_{S\cup i}v(T) = \Delta_i(\Delta_Sv(T)) = \Delta_{0_S}^{x_S}U(x_{T\cup
  i},0_{-T\cup i}) - \Delta_{0_S}^{x_S}U(x_T,0_{-T}) = 0
\]
by assumption of $p$-additivity of $U$. Letting $T=\emptyset$, it follows that
$v$ is $p$-additive too (in the sense of capacities), hence we can write:
\[
U(x) = v(N) = \sum_{S\subseteq N, 0<|S|\leq p}m^v(S)
\]
with $m^v(S) = \Delta_Sv(\emptyset) = \Delta_{0_S}^{x_S}U(\0)$. Since the latter
term only depends on the variables $x_S$, the desired result follows.
\end{proof}

\subsection{$p$-additive $k$-ary capacities}
By analogy with classical capacities, a $k$-ary capacity $v$ is said to be
\textit{(at most) $p$-additive} if $m^v(z)=0$ whenever
  $|\supp(z)|>p$, where
\[
\supp(z) = \{i\in N\mid z_i>0\}.
\]
\begin{lemma}\label{lem:1}
Let $k\in \NN$ and $p\in\{1,\ldots,n\}$. A $k$-ary game $v$ is
$p$-additive if and only if it has the form
\begin{equation}\label{eq:padd}
v(z) = \sum_{x\in L^N, 0<|\supp(x)|\leq p}v_x(x\wedge z) \qquad (z\in L^N)
\end{equation}
where $v_x:L^N\rightarrow \RR$ with $v_x(\0)=0$. 
\end{lemma}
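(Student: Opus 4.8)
The plan is to pass through the Möbius transform of $v$ and to treat the two implications separately; the explicit formula (\ref{eq:mobk}) for $m^v$ will be the main tool in both directions.

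For the \emph{only if} part, I would start from the hypothesis that $v$ is $p$-additive, i.e.\ $m^v(y)=0$ whenever $|\supp(y)|>p$. Since $v(\0)=0$ forces $m^v(\0)=0$ as well, the defining identity (\ref{eq:zeta}) reduces to $v(z)=\sum_{y\le z,\,0<|\supp(y)|\le p}m^v(y)$. The idea is then to group the terms according to the support $S=\supp(y)$ and to observe that, for a fixed nonempty $S$ with $|S|\le p$, the partial sum $\sum_{y:\supp(y)=S,\,y\le z}m^v(y)$ depends on $z$ only through the coordinates $z_i$ with $i\in S$ (for $i\notin S$ the constraint $y_i=0\le z_i$ is automatic). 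Choosing $x$ equal to $k$ on $S$ and $0$ off $S$, one has $(x\wedge z)_i=z_i$ on $S$ and $0$ elsewhere, so $x\wedge z$ records exactly this information; hence the partial sum can be written as $v_x(x\wedge z)$ for a suitable $v_x$. Setting all remaining $v_x$ to $0$ yields (\ref{eq:padd}), and each such $v_x$ vanishes at $\0$ because no $y$ with $\supp(y)=S\neq\emptyset$ satisfies $y\le\0$.

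For the \emph{if} part, I would assume the form (\ref{eq:padd}) and compute $m^v(z)$ directly from (\ref{eq:mobk}), interchanging the finite sums to get $m^v(z)=\sum_{x}\big(\sum_{y}(-1)^{\sum_i(z_i-y_i)}v_x(x\wedge y)\big)$, the inner sum running over $y\le z$ with $z_i-y_i\le 1$ for all $i$. Now fix $z$ with $|\supp(z)|>p$ and a single $x$ occurring in (\ref{eq:padd}), so $|\supp(x)|\le p<|\supp(z)|$; then there is a coordinate $j\in\supp(z)\setminus\supp(x)$. For this $j$ one has $x_j=0$, hence $(x\wedge y)_j=0$ for every $y$, so $v_x(x\wedge y)$ does not depend on $y_j$. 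Since $j\in\supp(z)$, the admissible values of $y_j$ are exactly $z_j$ and $z_j-1$, with respective signs $(-1)^{z_j-y_j}$ equal to $+1$ and $-1$. Summing over $y_j\in\{z_j-1,z_j\}$ first therefore produces a factor $+1-1=0$, so each $x$-block vanishes and $m^v(z)=0$, which is precisely $p$-additivity.

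The bookkeeping points (the swap of finite sums, the description of the admissible $y$ under $z_i-y_i\le 1$) are harmless; the one step that requires care is the \emph{if} direction, where one must verify that the factorization of the alternating sum along the free coordinate $j$ is legitimate, i.e.\ that $v_x(x\wedge y)$ is genuinely constant as $y_j$ ranges over its two admissible values while all other coordinates are held fixed. Once this is secured the cancellation does all the work. I expect no real difficulty in the \emph{only if} direction beyond the grouping, since the representation is not unique and it suffices to exhibit one admissible family $(v_x)$.
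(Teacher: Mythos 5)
Your argument is correct, and it diverges from the paper's proof in a way worth noting. For the \emph{only if} direction the two proofs are morally the same --- both exhibit an explicit family $(v_x)$ built from the M\"obius transform --- but the paper takes $v_x=m^v(x)\,u_x$ for \emph{every} $x$ with $0<|\supp(x)|\le p$ (using $u_x(x\wedge z)=u_x(z)$), whereas you aggregate all terms with a common support $S$ into a single $v_x$ attached to $x=k\cdot 1_S$; both choices are legitimate since the representation is not unique. The real difference is in the \emph{if} direction: the paper expands both sides in the basis of unanimity games, uses $u_y(x\wedge z)=u_y(z)$ for $x\ge y$ (so that only $y$ with $|\supp(y)|\le p$ can carry a nonzero coefficient), and concludes by uniqueness of the basis expansion; you instead plug the form (\ref{eq:padd}) into the explicit formula (\ref{eq:mobk}) and kill each $x$-block by pairing $y_j\in\{z_j-1,z_j\}$ along a coordinate $j\in\supp(z)\setminus\supp(x)$, where $v_x(x\wedge y)$ is constant in $y_j$ because $x_j=0$. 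The step you flagged as needing care is indeed the crux and it holds: the admissible $y$ form a product set, so the alternating sum factors and the $j$-factor is $(-1)^0+(-1)^1=0$. Your route is more computational and avoids invoking the linear independence of the unanimity games, at the price of leaning on the explicit M\"obius formula proved in the appendix; the paper's route is shorter once that basis machinery is taken for granted.
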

\begin{proof}
Suppose that $v$ is $p$-additive. By the decomposition of $v$ in the basis of
unanimity games, it follows that
\[
v = \sum_{x\in L^N, 0<|\supp(x)|\leq p}m^v(x)u_x,
\]
hence we have the required form with $v_x=m^v(x)u_x$. Conversely, again by
decomposition in the basis of unanimity games and since $v_x$ is a game,
(\ref{eq:padd}) can be rewritten as:
\begin{align*}
\sum_{y\in L^N}m^v(y)u_y(z) & =  \sum_{x\in L^N, 0<|\supp(x)|\leq p}\sum_{y\in
  L^N}m^{v_x}(y)u_y(x\wedge z) \\
 & = \sum_{y\in L^N, 0<|\supp(y)|\leq p}\sum_{x\in L^N,0<|\supp(x)|\leq
  p}m^{v_x}(y)u_y(x\wedge z)\\
 & = \sum_{y\in L^N, 0<|\supp(y)|\leq p}\Big(\sum_{x\geq y,0<|\supp(x)|\leq p}m^{v_x}(y)\Big)u_y(z).
\end{align*}
By uniqueness of the decomposition, it follows that $v$ is $p$-additive.
\end{proof}
Note that even if $v$ is a capacity, the $v_x$ are not necessarily capacities.

It follows from Theorem~\ref{th:1} and the above result that the set of
$p$-additive discrete GAI models on $X$ coincides with the set of (at most)
$p$-additive $k$-ary capacities.

\section{Monotone decomposition of a 2-additive GAI model}\label{sec:main}
\subsection{A complexity problem}
 We have seen in Section \ref{S2.3} that the GAI decomposition is not unique.
 Moreover, the terms in two equivalent GAI decompositions may have different
 monotonicity conditions, as shown in Example \ref{Ex1}.  Then the following
 question arises: \textit{Given a GAI model, is it always possible to get a
   decomposition into nonnegative nondecreasing terms?}  The main result of this
 paper will give a positive answer to this question, in the case of 2-additive
 GAI models. This case is of particular importance in practice, since it
 constitutes a good compromise between versatility and complexity. Experimental
 studies in multicriteria evaluation have shown that 2-additive capacities have
 almost the same approximation ability than general capacities (see, e.g.,
 \cite{grdulipe01}).  A two-additive GAI model is considered in
 \cite{bifameza12}, and a very similar model is defined in \cite{grmosl12}.

Before stating and proving the
result, we explain why it is important to solve this problem, which is related
to the complexity of the model.

We begin by computing the number of unknowns in a 2-additive GAI model
equivalent to a $k$-ary capacity. By Theorem~\ref{th:1}, such a model has the
form (\ref{eq:gai}) with $\cS$ being the set of singletons and pairs. Since
$|L|=k+1$, this yields
\[
(k+1)\binom{n}{1} + (k+1)^2\binom{n}{2} = \frac{n(k+1)}{2}\Big(2+(k+1)(n-1)\Big)
\]
unknowns. $U$ being monotone nondecreasing, this induces a number of
monotonicity constraints on the unknowns, of the type
\begin{equation}\label{eq:mono}
U(a_1^{j_1},\ldots,a_{i-1}^{j_{i-1}},a_i^{j_i+1},a_{i+1}^{j_{i+1}},\ldots,a_n^{j_n})
\geq U(a_1^{j_1},\ldots,a_{i-1}^{j_{i-1}},a_i^{j_i},a_{i+1}^{j_{i+1}},\ldots,a_n^{j_n})
\end{equation}
for every $i\in N$,
$j_1\in\{0,\ldots,m_1\},\ldots,j_{i-1}\in\{0,\ldots,m_{i-1}\},j_i\in\{0,\ldots,m_i-1\}$,
$j_{i+1}\in\{0,\ldots,m_{i+1}\}$,\ldots,$j_n\in\{0,\ldots,m_n\}$. The number of
elementary conditions contained in (\ref{eq:mono}) is equal to
\[
\sum_{i\in N}\Big(m_i\times\prod_{j\in N\setminus\{i\}}(m_j+1)\Big).
\]
In the case where $m_i=k$ for every $i$, this number becomes
\[
n\times k\times (k+1)^{n-1}.
\]
Although the number of variables was still quadratic in $n$ and $k$, the number
of constraints is exponential in $n$. It follows that any practical identification
  of a GAI model based on some optimization procedure\footnote{The learning
    problem can be classically transformed into a linear program, where the
    training set is seen as linear constraints on the GAI variables \citep{bifameza12,grmosl12}. It could also be possible to perform statistical learning, like in \cite{fachdehu12}, where the underlying optimization problem is a convex problem under linear constraints.}, where the variables are
  the unknowns of the GAI model and the constraints are the monotonicity
  constraints (\ref{eq:mono}) plus possibly some learning data,  has to cope with an
  exponential number of constraints. 
The following tables, obtained with $k=4$,
shows that the underlying optimization problem becomes rapidly intractable.
\begin{center}
\begin{tabular}{|p{3cm}|c|c|c|c|}\hline
$n$ & 4 & 6 & 8 & 10 \\ \hline
 $\sharp$ of variables  & 170 & 405 & 740 &  1175\\ \hline
$\sharp$ of constraints & 2000 & 75 000 & 2 500 000 & 78 125 000  \\ \hline
\end{tabular}
\medskip
\begin{tabular}{|p{3cm}|c|c|c|}\hline
$n$  & 12 & 14 & 20 \\ \hline
$\sharp$ of variables  & 1710 & 2345 & 4850 \\ \hline
$\sharp$ of constraints  & 2 343 750 000 & 68 359 375 000 &
  $1.526E+15$ \\ \hline
\end{tabular}
\end{center}

However, if a decomposition into nonnegative nondecreasing terms is possible, one
  has only to check monotonicity of each term. Then the number of monotonicity
  conditions drops to 
\[
\sum_{i\in N}m_i + \sum_{\{i,j\}\subseteq N}\big(m_i(m_j+1)+m_j(m_i+1)\big).
\]
In the case where $m_i=k$ for every $i$, this number becomes
\[
n\times k\times \Big[(n-1)(k+1) +1\Big],
\]
which is quadratic in $n$. The following table ($k=4$) shows that the
optimization problem becomes tractable even for a large number of attributes.
\begin{center}
\begin{tabular}{|p{4.5cm}|c|c|c|c|c|c|c|}\hline
$n$ & 4 & 6 & 8 & 10 & 12 & 14 & 20\\ \hline
$\sharp$ of constraints with monotone decomposition & 256 & 624 & 1152 & 1840 & 2688 & 3696
  & 7680\\ \hline
\end{tabular}
\end{center}

\subsection{The main result}
The following theorem states that a decomposition of a 2-additive GAI model into
monotone nondecreasing terms is always possible.
\begin{theorem}\label{th:main}
Let us consider a 2-additive discrete GAI model $U$ satisfying assumptions 1 and
2. Then there exist nonnegative and nondecreasing functions $u_i:X_i\rightarrow
[0,1]$, $i\in N$, $u_{ij}:X_i\times X_j\rightarrow [0,1]$, $\{i,j\}\subseteq N$,
such that
\[
U(x) = \sum_{i\in N}u_i(x_i) + \sum_{\{i,j\}\subseteq N}u_{ij}(x_i,x_j) \qquad
(x\in X)
\]
\end{theorem}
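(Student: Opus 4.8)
The plan is to transfer the statement to $k$-ary capacities and then reduce it to a finite, geometric computation. By Theorem~\ref{th:1} and Lemma~\ref{lem:1}, a 2-additive discrete GAI model $U$ satisfying assumptions 1 and 2 corresponds to a 2-additive $k$-ary capacity $v$ on $L^N$, and a decomposition of $U$ into nonnegative nondecreasing univariate and bivariate terms is, via the rank mapping $\varphi$, exactly a representation $v(z)=\sum_{i\in N}g_i(z_i)+\sum_{\{i,j\}\subseteq N}h_{ij}(z_i,z_j)$ in which every $g_i$ and $h_{ij}$ is nonnegative, nondecreasing, and vanishes when (one of) its argument(s) equals $0$. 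So it suffices to produce such a representation for every 2-additive $k$-ary capacity; the decomposition of $U$ is then recovered by restricting the $g_i,h_{ij}$ from $L^N$ back to $X$. Note also that the codomain $[0,1]$ is automatic: the terms being nonnegative and their values at the top $(k,\ldots,k)$ summing to $v(k,\ldots,k)=1$, each term lies in $[0,1]$.

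First I would observe that the set $\cP$ of 2-additive $k$-ary capacities on $L^N$ is a convex polytope. Indeed, it is carved out of $\RR^{L^N}$ by the linear equalities $v(\0)=0$ and $v(k,\ldots,k)=1$, by the 2-additivity equalities $m^v(z)=0$ for all $z$ with $|\supp(z)|>2$ (these are linear in $v$ through \refe{eq:mobk}), and by the monotonicity inequalities $v(z)\le v(z+e_i)$ ranging over all covering pairs. Being bounded and defined by finitely many affine constraints, $\cP$ has finitely many extreme points, and every $v\in\cP$ is a convex combination $v=\sum_\ell\lambda_\ell w_\ell$ of them.

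Second I would note that the property of admitting a nonnegative nondecreasing decomposition is stable under convex combination: if each extreme point $w_\ell$ decomposes as $\sum_i g_i^\ell+\sum_{ij}h_{ij}^\ell$, then setting $g_i=\sum_\ell\lambda_\ell g_i^\ell$ and $h_{ij}=\sum_\ell\lambda_\ell h_{ij}^\ell$ yields functions that are again nonnegative, nondecreasing, vanishing at $0$, and that sum to $v$, because $\lambda_\ell\ge 0$. Consequently the theorem follows once a nonnegative nondecreasing decomposition is exhibited for each extreme point of $\cP$.

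The heart of the argument, and the step I expect to be the main obstacle, is thus to determine the extreme points of $\cP$ and to decompose each of them. I would characterize the vertices by examining which monotonicity inequalities are tight: at a vertex, enough of the covering inequalities $v(z)=v(z+e_i)$ must become equalities so that, together with the 2-additivity and normalization equalities, $v$ is uniquely pinned down. I expect the resulting extreme capacities to have a simple, recognizable form, built from unanimity games $u_x$ supported on one or two coordinates, whose bivariate pieces can be written explicitly through $\min$, $\max$ and threshold operations on the two coordinates and are therefore manifestly nonnegative and nondecreasing, any residual negative univariate contribution being absorbable into the $g_i$ precisely because the relevant monotonicity constraints are active. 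Carrying out the enumeration of the vertex types in the $k$-ary setting (as opposed to the binary one) and checking the sign bookkeeping of the Möbius coefficients at each vertex is where the real work lies.
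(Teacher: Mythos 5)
Your reduction is the same as the paper's: pass to the polytope $\cP_{k,2}$ of 2-additive $k$-ary capacities, observe that the property of admitting a nonnegative nondecreasing decomposition is preserved under convex combinations, and therefore reduce to decomposing the extreme points. That frame is correct, and your remarks about the codomain $[0,1]$ and about recovering the GAI terms via $\varphi$ are fine. But the proposal stops exactly where the proof begins. Two substantive facts are needed and neither is established. First, that every extreme point of $\cP_{k,2}$ is $0$--$1$-valued: this does not follow from merely ``examining which monotonicity inequalities are tight,'' because the 2-additivity equalities $m^v(z)=0$ are imposed on top of the monotonicity system, and a vertex of the intersection of the capacity polytope with that linear subspace need not a priori be integral. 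The paper proves integrality by showing that the constraint matrix written in M\"obius coordinates is totally unimodular (inherited, via the Zeta transform, from the vertex-arc structure of the monotonicity constraints), and then separately bounds $|m^v(z)|\le 1$ using monotonicity and normalization; you would need an argument of comparable force.

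Second, and this is the real crux, you need that every $0$--$1$-valued 2-additive $k$-ary capacity has \emph{all} of its M\"obius mass supported on points whose supports lie in one common pair $\{i,j\}$ --- i.e.\ $|\supp(v)|\le 2$ in the sense $\supp(v)=\bigcup_{x:m^v(x)\neq 0}\supp(x)$. This is much stronger than 2-additivity (which only says each individual $x$ with $m^v(x)\neq0$ has $|\supp(x)|\le 2$; the pairs could vary). Without it, an extreme point could mix contributions from several pairs with negative M\"obius coefficients, and your hope that ``any residual negative univariate contribution'' can be ``absorbed'' has no justification --- indeed the whole difficulty is to rule such configurations out. The paper devotes its longest argument (Theorem~\ref{th:support}, a multi-case analysis distinguishing overlapping and non-overlapping pairs and deriving contradictions from the $0$--$1$ values and monotonicity) to precisely this point. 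Your acknowledgment that this enumeration ``is where the real work lies'' is accurate, but it means the proof has not been given: as it stands the proposal is a correct strategy with the two load-bearing lemmas missing.
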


The rest of this section is devoted to the proof of this theorem, which goes
through a number of intermediary results. First, we remark that the problem is
equivalent to the decomposition of a 2-additive $k$-ary capacity $v$
into a sum of 2-additive $k$-ary capacities whose support has size at most 2,
where the \textit{support} of $v$ is defined by
\[
\supp(v) = \bigcup_{x\in L^N: m^v(x)\neq 0}\supp(x).
\]

We consider $\cP_{k,\cdot}$ the polytope of $k$-ary capacities, and
$\cP_{k,2}$ the polytope of 2-additive $k$-ary capacities. Our aim is
to study the vertices of the latter, and we will show that these vertices are the
adequate $k$-ary capacities to perform the decomposition. 

A first easy fact is that the extreme points of $\cP_{k,\cdot}$ are the 0-1-valued
$k$-ary capacities.
\begin{lemma}\label{lem:k}
$\hat{v}$
is an extreme point of $\cP_{k,\cdot}$ iff $\hat{v}$ is 0-1-valued.
\end{lemma}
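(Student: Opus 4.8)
The plan is to prove both directions of the equivalence, showing that extreme points of $\cP_{k,\cdot}$ are exactly the $0$-$1$-valued $k$-ary capacities. I would first recall that $\cP_{k,\cdot}$ is a bounded polytope cut out in the vector space of $k$-ary games by finitely many linear constraints: the normalization equality $v(k,\ldots,k)=1$, the anchoring equality $v(\0)=0$, and the monotonicity inequalities $v(y)\leq v(z)$ for all covering pairs $y\leq z$ (it suffices to take $z=y+\mathbf{1}_i$ for each coordinate $i$). Since all coefficients in these constraints are integers and the feasible region is bounded, this is already a strong hint toward $0$-$1$ vertices, but I would give a direct combinatorial argument rather than invoke total unimodularity.

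For the ``if'' direction, I would argue that any $0$-$1$-valued $k$-ary capacity $\hat v$ is extreme. The key observation is that the monotone $0$-$1$ valued functions correspond to up-sets (order filters) in the lattice $L^N$: $\hat v(z)=1$ iff $z$ lies in some up-set containing $(k,\ldots,k)$ and excluding $\0$. Such a function takes values at the extreme ends of the box $[0,1]$ of possible values, so it cannot be written as a proper convex combination $\hat v=\tfrac12(v'+v'')$ with $v',v''\in\cP_{k,\cdot}$ distinct: at each point where $\hat v(z)=0$ both $v',v''$ must vanish (nonnegativity forces it), and at each point where $\hat v(z)=1$ both must equal $1$ (boundedness above by the normalized maximum $1$ via monotonicity forces it). Hence $v'=v''=\hat v$, so $\hat v$ is a vertex.

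For the ``only if'' direction, I would prove the contrapositive: if $\hat v\in\cP_{k,\cdot}$ takes some value $\hat v(z_0)\in(0,1)$, then $\hat v$ is not extreme. The idea is to perturb $\hat v$ by a small amount $\varepsilon$ at a carefully chosen set of points without violating the monotonicity or boundary constraints, producing $v^+=\hat v+\varepsilon\,\delta$ and $v^-=\hat v-\varepsilon\,\delta$ both in $\cP_{k,\cdot}$ with $\hat v=\tfrac12(v^++v^-)$. Concretely, I would look at a maximal chain through $z_0$ and perturb along the ``level set'' where $\hat v$ takes a fixed non-extreme value, pushing that whole level up or down; since the value was strictly between the values immediately below and above it in the chain (or strictly inside $(0,1)$), a small enough $\varepsilon$ preserves all inequalities. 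This exhibits $\hat v$ as a midpoint of two distinct feasible points, contradicting extremality.

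The main obstacle will be the bookkeeping in the ``only if'' direction: choosing the perturbation support $\delta$ so that monotonicity is preserved in \emph{every} direction simultaneously. Perturbing a single point $z_0$ generally breaks monotonicity with its neighbors, so one must perturb a whole ``slice'' of points sharing the same value, and verify that none of the covering inequalities $v(y)\le v(z)$ involving the boundary of this slice is tightened to equality beyond the allowed slack. The cleanest way to organize this is to fix a threshold value $t=\hat v(z_0)\in(0,1)$ and perturb uniformly on the set $\{z: \hat v(z)=t\}$, which is an antichain sandwiched strictly between the $<t$ region and the $>t$ region; choosing $\varepsilon$ less than the gap to the nearest distinct attained values guarantees feasibility of both $v^{\pm}$.
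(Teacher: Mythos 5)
Your proof is correct and follows essentially the same route as the paper: the identical convex-combination argument for the ``if'' direction, and a small $\varepsilon$-perturbation exhibiting a non-$0$-$1$-valued capacity as a midpoint for the ``only if'' direction (the paper shifts \emph{all} fractional values simultaneously by one $\epsilon$ rather than a single level set, but both choices preserve monotonicity and the boundary conditions). One cosmetic slip: the level set $\{z:\hat v(z)=t\}$ need not be an antichain (a monotone $\hat v$ can be constant along a chain), but your argument only uses that $\varepsilon$ is smaller than the gap to the nearest distinct attained values, so nothing breaks.
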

\begin{proof}
Take $\hat{v}$ in $\cP_{k,\cdot}$ which is 0-1-valued, and consider $v,v'\in
\cP_{k,\cdot}$ such that $\frac{v+v'}{2}=\hat{v}$. Then, since
$\hat{v}$ is 0-1-valued,
\[
v(x) + v'(x) = \begin{cases}
  2, & \text{if } \hat{v}(x)=1\\
  0, & \text{otherwise}.
  \end{cases}
\]
Since $v,v'$ are normalized and monotone, the only possibility to get
$v(x)+v'(x)=2$ is to have $v(x)=v'(x)=1$, and similarly, $v(x)+v'(x)=0$ forces
$v(x)=v'(x)=0$. It follows that $v=v'=\hat{v}$, i.e., $\hat{v}$ is an extreme
point of $\cP_{k,\cdot}$. 

 Conversely, consider a vertex $\hat{v}$ which is not 0-1-valued, and let 
\[
\epsilon = \min(1-\max_{x:\hat{v}(x)<1} \hat{v}(x),\min_{x:\hat{v}(x)>0}\hat{v}(x)).
\]
Define
\begin{align*}
v'(x) & = \hat{v}(x) + \epsilon, \text{ for all }x\text{ s.t. } \hat{v}(x)\neq
0,1\\
v''(x) & = \hat{v}(x)-\epsilon, \text{ for all }x\text{ s.t. } \hat{v}(x)\neq
0,1,
\end{align*}
and $v'=v''=\hat{v}$ otherwise. Then $v',v''\in \cP_{k,\cdot}$ and $\hat{v} =
\frac{v'+v''}{2}$, a contradiction.
\end{proof}

\begin{lemma}\label{lem:3}
 Let $k\in \NN$ and $v\in \cP_{k,2}$. Then $v$ is 0-1-valued iff $m^v$ is
 $\{-1,0,1\}$ valued.
\end{lemma}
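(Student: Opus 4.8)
The plan is to prove the two implications separately, using throughout the fact that $2$-additivity forces $m^v(z)=0$ as soon as $|\supp(z)|>2$, and that on the remaining points the inversion formula~\refe{eq:mobk} collapses to a short finite difference of $v$. First I would record these reduced forms: if $\supp(z)=\{i\}$ with $z_i=a$, then \refe{eq:mobk} gives the first difference $m^v(z)=v(a)-v(a-1)$ (writing only the $i$-th coordinate, all others being $0$); and if $\supp(z)=\{i,j\}$ with $z_i=a$, $z_j=b$, it gives the four-term second difference $m^v(z)=v(a,b)-v(a-1,b)-v(a,b-1)+v(a-1,b-1)$ (writing only the coordinates $i,j$).

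For the direction assuming $v$ is $0$-$1$-valued, I would treat three cases according to $|\supp(z)|$. The case $|\supp(z)|>2$ is immediate from $2$-additivity. For $\supp(z)=\{i\}$, $m^v(z)$ is a difference of two $0$-$1$ values, hence in $\{-1,0,1\}$ (indeed in $\{0,1\}$, since $v(a-1)\le v(a)$ by monotonicity). For $\supp(z)=\{i,j\}$, the second difference above is an alternating sum of four $0$-$1$ values, so a priori lies in $\{-2,\dots,2\}$; the only genuine work is to exclude $\pm2$. This follows from monotonicity: attaining $+2$ would force $v(a,b)=v(a-1,b-1)=1$ and $v(a-1,b)=v(a,b-1)=0$, contradicting $v(a-1,b-1)\le v(a-1,b)$, and $-2$ is excluded symmetrically. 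Hence $m^v$ is $\{-1,0,1\}$-valued.

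The converse is the short direction and rests only on integrality. If $m^v$ takes values in $\{-1,0,1\}\subseteq\ZZ$, then $v(z)=\sum_{y\le z}m^v(y)$ is a finite sum of integers, so $v$ is integer-valued. Since $v$ is a $k$-ary capacity, monotonicity together with $v(\0)=0$ and normalization give $0=v(\0)\le v(z)\le v(k,\dots,k)=1$ for every $z\in L^N$, and an integer lying in $[0,1]$ must be $0$ or $1$.

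The step I expect to be the only real obstacle is excluding the values $\pm2$ of $m^v$ on pairs in the forward direction; everything else is bookkeeping with~\refe{eq:mobk} and the monotonicity and normalization built into $\cP_{k,2}$. In particular, the backward direction needs nothing beyond the remark that a $\{-1,0,1\}$-valued M\"obius transform produces an integer-valued capacity squeezed between $0$ and $1$.
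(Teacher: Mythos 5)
Your proof is correct and follows essentially the same route as the paper: the backward direction via integrality of $v$ squeezed between $0$ and $1$, and the forward direction by reducing \refe{eq:mobk} to first and second differences and using monotonicity to rule out $\pm 2$ on pairs. You in fact spell out the exclusion of $\pm 2$ more explicitly than the paper, which simply asserts it follows from monotonicity.
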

\begin{proof}
$\Leftarrow)$ By the assumption $\sum_{y\leq x}m^v(y)\in \ZZ$ for every $x\in
  \{0,1,\ldots,k\}^N$. Since $v\in \cP_{k,2}$ it follows that $v$ is 0-1-valued.

$\Rightarrow)$ Assume $v$ is 0-1-valued and use (\ref{eq:mobk}) to compute the
  M\"obius transform. For $z=\ell_i$ with $\ell\in\{1,\ldots, k\}$, we have
  $m^v(z)=v(\ell_i)-v((\ell-1)_i)$, so that the desired result holds. Otherwise
  $z=\ell_i\ell'_j$ with $\ell,\ell'\in\{1,\ldots,k\}$ and distinct $i,j\in
  N$. Then 
\begin{equation}\label{eq:p2}
m^v(z) = v(z)-v((\ell-1)_i\ell'_j) - v(\ell_i(\ell'-1)_j) +
  v((\ell-1)_i(\ell'-1)_j).
\end{equation}
 By the assumption and monotonicity of $v$, it follows that $m^v(z)\in\{-1,0,1\}$.
\end{proof}

We recall that a $m\times n$ matrix is totally unimodular if the determinant of
every square submatrix is equal to $-1$, 0 or 1. A polyhedron is integer if all
its extreme points have integer coordinates. Then a matrix $A$ is totally
unimodular iff the polyhedron $\{x\mid Ax\leq b\}$ is integer for every integer
vector $b$. In particular it is known that the vertex-arc matrix $M$ of a
directed graph, i.e., whose entries are $M_{x,a}=1$ if the arc $a$ leaves vertex
$x$, $-1$ if $a$ enters $x$, and 0 otherwise, is totally unimodular (in other
words, each column of $M$ has exactly one $+1$ and one $-1$, the rest being 0).

We are now in position to characterize the extreme points of $\cP_{k,2}$. 
\begin{theorem}\label{th:2}
Let $k\in \NN$. The set of extreme points of $\cP_{k,2}$, the polytope of
2-additive $k$-ary capacities, is the set of 0-1-valued 2-additive $k$-ary
capacities.
\end{theorem}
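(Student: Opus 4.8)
The plan is to prove the two inclusions separately, the first being immediate and the second carrying the real content. For the \emph{easy inclusion} --- that every $0$-$1$-valued $2$-additive $k$-ary capacity $\hat v$ is an extreme point of $\cP_{k,2}$ --- I would simply invoke Lemma~\ref{lem:k}: being $0$-$1$-valued, $\hat v$ is an extreme point of the larger polytope $\cP_{k,\cdot}$; since $\cP_{k,2}\subseteq\cP_{k,\cdot}$ and $\hat v\in\cP_{k,2}$, it is \emph{a fortiori} an extreme point of $\cP_{k,2}$. (Equivalently, one reruns the averaging argument of Lemma~\ref{lem:k} verbatim: if $\hat v=\frac{v+v'}{2}$ with $v,v'\in\cP_{k,2}$, then $v,v'$ are monotone and normalized, hence take values in $[0,1]$, and the $0$-$1$ values of $\hat v$ force $v=v'=\hat v$.)

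The \emph{hard inclusion} is that every extreme point $\hat v$ of $\cP_{k,2}$ is $0$-$1$-valued, and here Lemma~\ref{lem:k} is of no direct use, since imposing $2$-additivity creates new vertices that are not vertices of $\cP_{k,\cdot}$. The strategy is instead to show that $\cP_{k,2}$ is an \emph{integral} polytope, i.e.\ that each of its vertices satisfies $v(z)\in\ZZ$ for every $z\in L^N$. Combined with monotonicity and normalization, which force $0=v(\0)\le v(z)\le v(k,\ldots,k)=1$, integrality immediately yields $v(z)\in\{0,1\}$, that is, $\hat v$ is $0$-$1$-valued. (One may then also read off, through (\ref{eq:p2}) and the $\Leftarrow$ direction of Lemma~\ref{lem:3}, that $m^{\hat v}$ is $\{-1,0,1\}$-valued, re-confirming consistency with the easy inclusion.)

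To establish integrality I would describe $\cP_{k,2}$ by a linear system whose constraint matrix is totally unimodular. The natural coordinates are the free Möbius parameters of a $2$-additive capacity, namely $m^v(\ell_i)$ for $i\in N$, $\ell\in\{1,\ldots,k\}$, and $m^v(\ell_i\ell'_j)$ for $\{i,j\}\subseteq N$, $\ell,\ell'\in\{1,\ldots,k\}$, all other coefficients vanishing by $2$-additivity. In these coordinates normalization reads $\sum m^v(\cdot)=1$, while, using $v(z)=\sum_{y\le z}m^v(y)$ from (\ref{eq:zeta}), the elementary monotonicity constraint comparing $z$ with the point obtained by raising a single coordinate $i$ by one becomes, on setting $\ell=z_i+1$, the inequality $m^v(\ell_i)+\sum_{j\ne i}\sum_{\ell'=1}^{z_j}m^v(\ell_i\ell'_j)\ge 0$, i.e.\ the nonnegativity of a $\{0,1\}$-combination of Möbius coefficients. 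The key step is to recognise the resulting coefficient matrix as the incidence matrix of a directed-graph structure of the kind recalled just before the statement, hence totally unimodular; the right-hand sides are integral ($0$ for the monotonicity rows that are tight at a vertex, $1$ for normalization), so each vertex is integer-valued. Equivalently, at a vertex the tight monotonicity rows together with the normalization row form a nonsingular system whose determinant is $\pm1$, so its unique solution is integral.

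The main obstacle is exactly this total-unimodularity verification \emph{after} $2$-additivity has been imposed. In the full space of $k$-ary games the constraints $v(z^{+})-v(z)\ge 0$ (with $z^{+}$ obtained by raising one coordinate) form, in the $v$-coordinates, a genuine vertex-arc incidence matrix --- one $+1$ and one $-1$ per row --- and are therefore visibly totally unimodular, which in passing re-proves Lemma~\ref{lem:k}. Substituting the $2$-additivity equalities, however, couples many $v$-values, so after elimination the clean incidence pattern disappears and must be re-established in the Möbius coordinates, where each monotonicity row is a longer $\{0,1\}$-combination with columns that are far from having only two nonzeros. Showing that no square submatrix of this reduced system has determinant of absolute value at least $2$ --- equivalently, exhibiting the underlying digraph whose arcs realise these rows --- is the delicate point on which the whole argument rests.
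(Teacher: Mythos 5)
Your overall strategy is the same as the paper's: handle the easy inclusion via Lemma~\ref{lem:k} (a $0$-$1$-valued capacity is extreme in the larger polytope $\cP_{k,\cdot}$, hence in $\cP_{k,2}$), and obtain the hard inclusion by proving that $\cP_{k,2}$ is an integral polytope through total unimodularity of its constraint matrix in M\"obius coordinates, whence $0\le v(z)\le 1$ forces $v(z)\in\{0,1\}$. That skeleton is correct, and your derivation of the monotonicity rows as $\{0,1\}$-combinations $m^v(\ell_i)+\sum_{j\neq i}\sum_{\ell'=1}^{z_j}m^v(\ell_i\ell'_j)\ge 0$ is accurate.

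However, there is a genuine gap exactly where you flag "the delicate point on which the whole argument rests": you never prove that the M\"obius-coordinate matrix is totally unimodular, and the method you propose for doing so --- recognising it as a vertex-arc incidence matrix --- cannot work, since (as you yourself observe) its rows are long $\{0,1\}$-vectors rather than rows with one $+1$ and one $-1$. The missing idea is an \emph{indirect} transfer of total unimodularity. First, the matrix $A_{k,\cdot}$ of the full polytope $\cP_{k,\cdot}$ \emph{in the $v$-coordinates} is totally unimodular (there the incidence structure is manifest), so $\{v\mid A_{k,\cdot}v\le b\}$ is integral for every integer $b$. Second, the zeta matrix $Z$ with $Zm^v=v$ is an integer matrix whose inverse is also integer-valued (entries in $\{-1,0,1\}$, by~(\ref{eq:mobk})); hence the extreme points of $\{m\mid A_{k,\cdot}Zm\le b\}$, being the images under $Z^{-1}$ of the extreme points of $\{v\mid A_{k,\cdot}v\le b\}$, are integral for every integer $b$. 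By the characterisation "$A$ is totally unimodular iff $\{x\mid Ax\le b\}$ is integral for all integer $b$", the M\"obius-coordinate matrix $A^m_{k,\cdot}=A_{k,\cdot}Z$ is totally unimodular \emph{without any direct inspection of its entries}. Finally, imposing $2$-additivity amounts to deleting the columns indexed by $m^v(y)$ with $|\supp(y)|>2$, and every submatrix of a totally unimodular matrix is totally unimodular, which yields integrality of $\cP^m_{k,2}$ and hence of $\cP_{k,2}$. Without this transfer argument (or some substitute for it), your proof does not go through.
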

\begin{proof}
By Lemma~\ref{lem:k}, we need only to prove that any extreme point of $\cP_{k,2}$
is 0-1-valued.

1. We prove that $A_{k,\cdot}$, the matrix defining the polytope of $k$-ary
capacities, is totally unimodular. The argument follows the one given for
classical capacities by  Miranda et al. \cite[Th. 2]{micogi06}. We prove that
$A^\top_{k,\cdot}$ is totally unimodular, which is equivalent to the desired
result. Since the monotonicity constraints are either of the form $v(1_i)\geq 0$
or $v(x)-v(x')\geq 0$ where $x'$ is a lower neighbor of $x$ (i.e. $x'=x-1_i$ for
some $i$), the matrix $A^\top_{k.\cdot}$ has the form $(I,B)$, where $I$ is a
submatrix of the $(k^n-1)$-dim identity matrix $I_{k^n-1}$, and $B$  is a matrix where each
column has exactly one $+1$ and one $-1$. Hence $B$ is totally unimodular, and
so is $(I_{k^n-1},B)$ as it easy to check. Since $A^\top_{k,\cdot}$ is a
submatrix of it, it is also totally unimodular.

2. It follows from Step 1 that the polytope $\cP_{k,\cdot}(b)$ given by
$A_{k,\cdot}v\leq b$ is integer for every integer vector $b$. Next, consider the
$(k^n-1)\times (k^n-1)$-matrix $Z$ expressing the Zeta transform, i.e.,
$Zm^v=v$, as given by (\ref{eq:zeta}). This matrix has only 0 and 1 as entries,
and its inverse $Z^{-1}$ exists and its entries are $0,-1,+1$ only (see
(\ref{eq:mobk})). Consider the polytope $\cP^m_{k,\cdot}(b)$ given by
$A^m_{k,\cdot}m\leq b$ with $A^m_{k,\cdot}=A_{k,\cdot}Z$, the image by the
linear transform $Z$ of the polytope $\cP_{k,\cdot}(b)$.  It is easy to check
that $\hat{v}$ is an extreme point of $\cP_{k,\cdot}(b)$ iff $Z^{-1}\hat{v}$ is
an extreme point of $\cP^m_{k,\cdot}(b)$. Evidently, the coordinates of
$Z^{-1}\hat{v}$ are integer, therefore $\cP^m_{k,\cdot}(b)$ is integer for every
integer vector $b$. We conclude that $A^m_{k,\cdot}$ is totally unimodular.

3. Inasmuch as a submatrix of a totally unimodular matrix is itself totally
unimodular, it follows from Step~2 that $A^m_{k,2}$, the matrix defining the set
of 2-additive $k$-ary capacities in M\"obius coordinates, is also totally
unimodular. As a conclusion, the extreme points of $\cP^m_{k,2}$ are
integer-valued. 

4. We show that the extreme points of $\cP^m_{k,2}$ are $\{-1,0,1\}$-valued. Then
Lemma~\ref{lem:3} permits to conclude. It suffices to show that $|m^v(z)|\geq 2$
cannot happen. If $z=\ell_i$ with $\ell\in\{1,\ldots,k\}$, we find by
(\ref{eq:mobk}) that $m^v(z)=v(\ell_i)-v((\ell-1)_i)$, so that the claim holds
since $v\in \cP_{k,2}$. Otherwise, $z=\ell_i\ell'_j$ with
$\ell,\ell'\in\{1,\ldots,k\}$ and distinct $i,j$, and $m^v(z)$ is given by
(\ref{eq:p2}). Since $v$ is monotone and normalized, the claim easily follows.
\end{proof}

The last step is to prove that a 0-1-valued 2-additive $k$-ary
capacity has a support of size at most 2.
\begin{theorem}\label{th:support}
Consider a 2-additive $k$-ary capacity $u$  on $N$ which is
0-1-valued. Then the support of $u$ is restricted to at most two attributes.
\end{theorem}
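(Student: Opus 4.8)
The plan is to reduce the claim to a local combinatorial property of the $\{0,1\}$-valued monotone function $u$, and then to argue on the minimal generators of its upper level set $F=\{z\in L^N : u(z)=1\}$.

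First I would isolate the two consequences of $2$-additivity that drive everything. Expanding $u=\sum_y m^u(y)u_y$ and using $m^u(y)=0$ once $|\supp(y)|\ge3$, a direct evaluation of the mixed second difference yields, for distinct $i,j$ and every $x$,
\[
\Delta_i\Delta_j u(x):=u(x+1_i+1_j)-u(x+1_i)-u(x+1_j)+u(x)=m^u\big((x_i+1)_i(x_j+1)_j\big).
\]
Thus $\Delta_i\Delta_j u(x)$ depends only on $(x_i,x_j)$ and, by Lemma~\ref{lem:3}, lies in $\{-1,0,1\}$; in particular every mixed third difference $\Delta_i\Delta_j\Delta_\ell u$ vanishes identically, and $\Delta_\ell u\in\{0,1\}$ since $u$ is $\{0,1\}$-valued and monotone. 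I would then pass to generators: $F$ is an up-set not containing $\0$, its minimal elements $z^{(1)},\dots,z^{(r)}$ form an antichain, $u=\bigvee_m u_{z^{(m)}}$, and $i\in\supp(u)$ iff $z^{(m)}_i>0$ for some $m$. Hence $\supp(u)=\bigcup_m\supp(z^{(m)})$, and the theorem becomes the assertion that this union has at most two elements.

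The warm-up is to show each single generator has support of size at most $2$. If some $g:=z^{(m)}$ had three support coordinates $i,j,\ell$, I would evaluate the (vanishing) third difference $\Delta_i\Delta_j\Delta_\ell u$ at the base point $x=g-1_i-1_j-1_\ell$. The top vertex $x+1_i+1_j+1_\ell$ is $g$, of value $1$, while every other vertex lies below $g-1_i$, $g-1_j$ or $g-1_\ell$, none of which is in $F$ by minimality, so the seven remaining values are $0$. No other generator can interfere, because any generator $\le x+1_i$ would be $\le g$, hence equal to $g$ by the antichain property --- impossible, as it would be strictly below $g$ in some coordinate. The third difference is therefore $1\neq0$, a contradiction; so from now on every generator has support $\le2$.

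The substantial half is that the generators cannot jointly cover three coordinates $i,j,\ell\in\supp(u)$. If all generators are singletons, I pick one of least value on each of $i,j,\ell$ and evaluate $\Delta_i\Delta_j\Delta_\ell u$ at the point sitting one unit below these three thresholds (all other coordinates $0$): the base is $0$ and all seven vertices that raise at least one threshold-coordinate are $1$, so the third difference equals $1$. Otherwise there is a pair generator, say $\supp(g)=\{i,j\}$, together with a third active coordinate $\ell$. The engine here is the independence of $\Delta_i\Delta_j u$ from coordinate $\ell$: at the corner $x=g-1_i-1_j$ (other coordinates $0$) the antichain property again forces $\Delta_i\Delta_j u(x)=1$; but if I can raise coordinates outside $\{i,j\}$ so as to satisfy a generator $h$ witnessing $\ell$, then by monotonicity all four vertices become $1$ and $\Delta_i\Delta_j u=0$ at a point differing from $x$ only outside $\{i,j\}$, contradicting the independence. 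This lifting succeeds directly whenever each coordinate of $\supp(h)\cap\{i,j\}$ already meets its $h$-requirement at the corner (i.e. $h_i\le g_i-1$, resp. $h_j\le g_j-1$); the strict case $h_i>g_i$ is dispatched symmetrically by running the same argument on the pair $\{i,\ell\}$ at the corner of $h$ and lifting coordinate $j$.

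The step I expect to be the main obstacle is the remaining \emph{tied} configuration, where $\ell$ is witnessed only through a generator $h$ on $\{i,\ell\}$ with $h_i=g_i$ (possibly accompanied by a further generator on $\{j,\ell\}$, i.e. a ``triangle''), so that no lifting is available without disturbing the pinned corner. My resolution is to return to the vanishing third difference $\Delta_i\Delta_j\Delta_\ell u$ and evaluate it at the joint lower corner of the involved generators, where monotonicity together with the generator inequalities determines all eight vertices and forces the alternating sum to be nonzero; in the clean tied case this sum is $-1$, and in the triangle case it is $-2$ --- exactly the defect that, for $k=1$, excludes the ``at least two out of three'' capacity. The technical heart is to verify that in every such configuration some base point makes this alternating sum nonzero, which amounts to using the antichain property to control precisely which generators can lie below the chosen corner and hence which of the eight values they can raise.
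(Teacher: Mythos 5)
Your proposal is essentially correct but follows a genuinely different route from the paper. The paper never touches the M\"obius transform in this proof: it writes $u=\sum_{\{i,j\}}u_{i,j}(x_i,x_j)$, normalizes the terms so that $u_{i,j}(0,0)=0$ and $u_{i,j}(x_i,0)=u_{i,j}(0,x_j)=0$ for the cross terms, and then runs a long case analysis on the numerical values $u_{i,j}(y_i,z_j)\in\{-1,0,1\}$ at a handful of compound alternatives, separately for non-overlapping supports ($u_{1,2}$ vs.\ $u_{3,4}$) and overlapping ones ($u_{1,2}$ vs.\ $u_{1,3}$). You instead work intrinsically with $u$ itself, via two facts that 2-additivity gives for free: all mixed third differences vanish, and $\Delta_i\Delta_j u(x)=m^u\bigl((x_i+1)_i(x_j+1)_j\bigr)$ depends only on $(x_i,x_j)$; you then argue on the antichain of minimal winning coalitions (exactly the objects of Lemma~\ref{lem:simp}, which the paper only exploits \emph{after} this theorem). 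This is more conceptual, avoids the repeated re-normalization of the $u_{i,j}$, and makes the role of 2-additivity transparent; the paper's version is more elementary but harder to survey. Your warm-up (a single generator has support $\le 2$), the all-singleton case, the lifting argument, and the swap for $h_i>g_i$ all check out.

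The one place where your write-up falls short of a proof is exactly where you say it does: the tied configuration $h_i=g_i$ (and the triangle). The idea is right and the computation does close, but it needs one extra choice you do not state: among all generators witnessing $\ell$, take $h$ with \emph{minimal} $\ell$-coordinate. With that choice, at $x=(g_i-1)_i(g_j-1)_j(h_\ell-1)_\ell$ the antichain property pins down all eight vertex values --- the three vertices that drop $\ell$ below $h_\ell$ cannot meet any $\ell$-witness, the vertices that drop $i$ or $j$ below $g$ cannot meet $g$, and the only unforced vertex is $x+1_j+1_\ell$, which is $1$ precisely when a triangle generator $(g_j)_j(h_\ell)_\ell$ exists --- giving a third difference of $-1$ or $-2$, hence a contradiction. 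Without the minimality of $h_\ell$, another $\{i,\ell\}$-generator with a smaller $\ell$-value can turn on $x+1_i$ and break the count. So: correct strategy, one determination (the choice of $h$) and the resulting eight-vertex bookkeeping still to be written down.
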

\begin{proof}

{\bf Preliminary Step.} $u$ being 2-additive, its expression is
\begin{equation}
u(x)=\sum_{\{i,j\}\subseteq N} u_{i,j}(x_i,x_j) \qquad (x\in X).
\label{E0-1.2}
\end{equation}
If we set $u'_{i,j}(x_i,x_j)=u_{i,j}(x_i,x_j)-u_{i,j}(0,0)$, we obtain
$u(x)=\sum_{\{i,j\}\subseteq N} u'_{i,j}(x_i,x_j) + C$, where $C=
-\sum_{\{i,j\}\subseteq N} u_{i,j}(0,0)$. By assumption~2 and $u'_{i,j}(0,0)=0$, one gets $C=0$.
This proves that in decomposition \refe{E0-1.2}, one can always assume that
\begin{equation}
  \forall \{i,j\}\subseteq N \qquad u_{i,j}(0,0)=0 .
\label{E0-1.3}
\end{equation}
We wish to prove that $u$ depends only on one term $u_{i,j}$ in \refe{E0-1.2}.
In order to avoid cases where such a term $u_{i,j}$ depends only on one variable (in which case $u$ might also depend on another term $u_{k,l}$), we are interested in terms $u_{i,j}$ depending on its two variables $x_i$ and $x_j$.
We say that $u_{i,j}$ \emph{depends on its two variables} if
\begin{align}
 & \exists y_i \in X_i \ \exists y_j \in X_j \qquad u_{i,j}(y_i,y_j) \not= u_{i,j}(y_i,0)  \label{E0-1.4} \\
 & \exists y'_i \in X_i \ \exists y'_j \in X_j \qquad u_{i,j}(y'_i,y'_j) \not= u_{i,j}(0,y'_j)  \label{E0-1.5} 
\end{align}
Clearly, if \refe{E0-1.4} (resp. \refe{E0-1.5}) is not fulfilled, then $u_{i,j}$ does not depend on attribute $x_j$ (resp. $x_i$).

The proof is organized as follows.  We show in Step~1 that if there is no term
$u_{i,j}$ that depends on its two variables, then $u$ depends only on one
variable.  We then assume that at least one term $u_{i,j}$ depends on its two
variables -- denoted $u_{1,2}$ w.l.o.g.  Step~2 shows that it is not possible to
have a non-zero term $u_{i,j}$, with $\{i,j\} \subseteq N\setminus \{1,2\}$.
Step~3 proves that it is not possible to have a non-zero term $u_{i,j}$, with
$i\in \{1,2\}$ and $j \in N\setminus \{1,2\}$.  We conclude that $u_{1,2}$ is
the only non-zero term in the decomposition.  This proves that $u$ depends only
on two variables.

\medskip

{\bf Step 1: case of the additive utility model.} We first start with the case where there is no term $u_{i,j}$ that depends on its two variables.

\begin{lemma}
Assume that there is no term $u_{i,j}$ that depends on its two variables.
Then the support of $u$ is restricted to one attribute.
\label{L0-1.1}
\end{lemma}
\begin{proof}
If there is no term $u_{i,j}$ that depends on its two variables, $u$ takes the form of an additive utility:
\[ u(x) = \sum_{i\in N} u_i(x_i) 
\]
where $u_i:X_i \rightarrow \RR$ is not necessarily nonnegative or monotone.
By \refe{E0-1.3}, we have $u_i(0)=0$ for every $i\in N$.

Let $i\in N$, we write
$u(x_i,0_{-i}) = u_i(x_i)$.
Hence $u_i$ is 0-1-valued and monotone.

As $u$ is not constant by Assumption 2, at least one term $u_i$ is not constant.
W.l.o.g. let us assume it is $u_1$.
Then there exists $x_1\in X_1$ such that $u_1(x_1)=1$.

Now for every $i\in N\setminus \{1\}$ and $x_i\in X_i$,
$u(x_1,x_i,0_{-1,i}) = 1 + u_i(x_i)$.
As $u_i$ is nonnegative and $u$ is 0-1-valued, we conclude that $u_i(x_i)=0$.
Hence $u$ depends only on $x_1$.
\end{proof}

\medskip

{\bf Step 2: Case where $u$ has two non-zero terms with non-overlapping support,
  e.g., $u_{1,2}$ and $u_{3,4}$.} We now focus on the situation where at least
one term $u_{i,j}$ depends on its two variables.  W.l.o.g., we assume it is
$u_{1,2}$.

We consider the general case where there are at least $4$ attributes.
The restriction with only $3$ attributes will be handled in Step 3. For every $j\in N\setminus \{1,2\}$, we choose $k(j) \in N\setminus \{1,2,j\}$
(where $k(j) \not= k(j')$ for $j\not=j'$).
For every $i\in \{1,2\}$ and $j\in N\setminus \{1,2\}$, we set
\begin{align*}
 & u'_{i,j}(x_i,x_j) = u_{i,j}(x_i,x_j) - u_{i,j}(x_i,0) - u_{i,j}(0,x_j)  \\
 & u'_{1,2}(x_1,x_2)=u_{1,2}(x_1,x_2) + \sum_{j\in N\setminus \{1,2\}} (u_{1,j}(x_1,0) + u_{2,j}(x_2,0) ) \\
 & u'_{j,k(j)}(x_j,x_{k(j)}) = u_{j,k(j)}(x_j,x_{k(j)}) + u_{1,j}(0,x_j) + u_{2,j}(0,x_j)
\end{align*}
Then $u(x)=\sum_{\{i,j\}\subseteq N} u'_{i,j}(x_i,x_j)$.
Moreover $u'_{i,j}(x_i,0)=0$ and $u'_{i,j}(0,x_j)=0$ for $i\in \{1,2\}$, $j\in N\setminus \{1,2\}$, $x_i\in X_i$ and $x_j\in X_j$.
Hence in decomposition \refe{E0-1.2}, we can assume that
\begin{equation}
  \forall i\in \{1,2\} \ \forall j\in N\setminus \{1,2\} \ \forall x_i\in X_i \ \forall x_j\in X_j
	 \qquad  u_{i,j}(x_i,0)=0 \mbox{ and } u_{i,j}(0,x_j)=0 .
\label{E0-1.6}
\end{equation}

Thanks to \refe{E0-1.3} and \refe{E0-1.6}, we have
\begin{equation} 
  u(x_1,x_2,0_{-1,2}) = u_{1,2}(x_1,x_2)
\label{E0-1.7}
\end{equation}
Hence
\begin{equation} 
  u_{1,2} \mbox{ is 0-1-valued and monotone} .
\label{E0-1.8}
\end{equation}
By \refe{E0-1.8}, conditions \refe{E0-1.4} and \refe{E0-1.5} with $i=1$, $j=2$ give
\begin{equation} 
  \begin{array}{l}
	  \displaystyle u_{1,2}(y_1,y_2)=1 \ , \ u_{1,2}(y_1,0)=0 \\
		\displaystyle u_{1,2}(y'_1,y'_2)=1 \ , \ u_{1,2}(0,y'_2)=0
	\end{array}
\label{E0-1.9}
\end{equation}

Assume by contradiction that there exists a non-zero $u_{i,j}$ for some $\{i,j\} \subseteq N\setminus\{1,2\}$. 
W.l.o.g., we assume it is $u_{3,4}$.
Then there exists $z_3 \in X_3$ and $z_4 \in X_4$ such that $u_{3,4}(z_3,z_4) \not=0$.
As for \refe{E0-1.6}, we can transfer, for $i\in \{3,4\}$ and $j\in N\setminus\{1,2,3,4\}$,
the term $u_{i,j}(x_i,0)$ in $u_{3,4}$. Hence we can assume that
\begin{equation}
  \forall i\in \{3,4\} \ \forall j\in N\setminus\{1,2,3,4\} \ \forall x_i\in X_i \qquad  u_{i,j}(x_i,0)=0 .
\label{E0-1.10}
\end{equation}
Thanks to \refe{E0-1.3}, \refe{E0-1.6} and \refe{E0-1.10}, we have
\begin{equation} 
  u(x_3,x_4,0_{-3,4}) = u_{3,4}(x_3,x_4)
\label{E0-1.11}
\end{equation}
Hence
\begin{equation} 
  u_{3,4} \mbox{ is 0-1-valued, monotone, and } u_{3,4}(z_3,z_4) = 1 .
\label{E0-1.12}
\end{equation}

\begin{lemma}
If $u_{1,2}$ depends on its two variables, then $u_{3,4}$ is identically zero.
\label{L0-1.2}
\end{lemma}
\begin{proof}
We set $v(x_1,x_2,x_3,x_4) = u(x_1,x_2,x_3,x_4,0_{-1,2,3,4})$. We write
\[ v(x_1,x_2,x_3,x_4) = \sum_{1 \leq i < j \leq 4} u_{i,j}(x_i,x_j) .
\]

\dep\underline{\bf Analysis with $y$ and $z$:}
\begin{itemize}
\item $v(y_1,y_2,z_3,z_4) = \underbrace{u_{1,2}(y_1,y_2)}_{=1} + \underbrace{u_{3,4}(z_3,z_4)}_{=1} 
+ \sum_{i\in\{1,2\},j\in\{3,4\}} u_{i,j}(y_i,z_j)$.
We have $v(y_1,y_2,z_3,z_4)=1$ as $v(y_1,y_2,z_3,z_4) \geq v(y_1,y_2,0,0)=u_{1,2}(y_1,y_2)=1$.
Hence
\begin{equation} 
  \sum_{i\in\{1,2\},j\in\{3,4\}} u_{i,j}(y_i,z_j) = -1 .
\label{E0-1.13}
\end{equation}
\item $\underbrace{v(y_1,y_2,z_3,0)}_{=1 \ \mathrm{by\ monotonicity}} = 1 + u_{3,4}(z_3,0) + u_{1,3}(y_1,z_3) + u_{2,3}(y_2,z_3)$.
Hence
\begin{equation} 
  u_{3,4}(z_3,0) + u_{1,3}(y_1,z_3) + u_{2,3}(y_2,z_3) = 0 .
\label{E0-1.14}
\end{equation}
\item $\underbrace{v(y_1,y_2,0,z_4)}_{=1 \ \mathrm{by\ monotonicity}} = 1 + u_{3,4}(0,z_4) + u_{1,4}(y_1,z_4) + u_{2,4}(y_2,z_4)$.
Hence
\begin{equation} 
  u_{3,4}(0,z_4) + u_{1,4}(y_1,z_4) + u_{2,4}(y_2,z_4) = 0 .
\label{E0-1.15}
\end{equation}
\item $\underbrace{v(y_1,0,z_3,z_4)}_{=1 \ \mathrm{by\ monotonicity}} = u_{1,2}(y_1,0) + 1 + u_{1,3}(y_1,z_3) + u_{1,4}(y_1,z_4)$.
Moreover, $u_{1,2}(y_1,0)=0$ by \refe{E0-1.9}.
Hence
\begin{equation} 
  u_{1,3}(y_1,z_3) + u_{1,4}(y_1,z_4) = 0 .
\label{E0-1.16}
\end{equation}
\item $\underbrace{v(0,y_2,z_3,z_4)}_{=1 \ \mathrm{by\ monotonicity}} = u_{1,2}(0,y_2) + 1 + u_{2,3}(y_2,z_3) + u_{2,4}(y_2,z_4)$.
Hence
\begin{equation} 
  u_{1,2}(0,y_2) + u_{2,3}(y_2,z_3) + u_{2,4}(y_2,z_4) = 0 .
\label{E0-1.17}
\end{equation}
\item From \refe{E0-1.16}, \refe{E0-1.17} and \refe{E0-1.13},
\begin{equation} 
  u_{1,2}(0,y_2) = 1 .
\label{E0-1.18}
\end{equation}
\item $v(0,y_2,z_3,0) = \underbrace{u_{1,2}(0,y_2)}_{=1 \ \mathrm{by\ \refe{E0-1.18}}} + u_{3,4}(z_3,0) + u_{2,3}(y_2,z_3)$.
Moreover, $v(0,y_2,z_3,0) \geq v(0,y_2,0,0) = u_{1,2}(0,y_2) = 1$.
Hence
\begin{equation} 
  u_{3,4}(z_3,0) + u_{2,3}(y_2,z_3) = 0  \mbox{ and } u_{2,3}(y_2,z_3) \in \{-1,0\} .
\label{E0-1.20}
\end{equation}
\item $v(0,y_2,0,z_4) = 1 + u_{3,4}(0,z_4) + u_{2,4}(y_2,z_4)$.
Moreover, $v(0,y_2,0,z_4) \geq v(0,y_2,0,0) = u_{1,2}(0,y_2) = 1$.
Hence
\begin{equation} 
  u_{3,4}(0,z_4) + u_{2,4}(y_2,z_4) = 0  \mbox{ and } u_{2,4}(y_2,z_4) \in \{-1,0\} .
\label{E0-1.23}
\end{equation}
\end{itemize}

\medskip

From \refe{E0-1.13} and \refe{E0-1.16}, we get
$u_{2,3}(y_2,z_3)+u_{2,4}(y_2,z_4)=-1$.
As $u_{2,3}(y_2,z_3),u_{2,4}(y_2,z_4) \in \{-1,0\}$ (by \refe{E0-1.20} and \refe{E0-1.23}), we have two cases:
\begin{itemize}
\item \underline{Case 1: $u_{2,3}(y_2,z_3) = -1$ and $u_{2,4}(y_2,z_4) = 0$}. 
Then
\begin{align*}
 & u_{3,4}(z_3,0) = 1 \quad  \mbox{by \refe{E0-1.20}} \\
 & u_{1,3}(y_1,z_3) = 0 \quad  \mbox{by \refe{E0-1.14}} \\ 
 & u_{1,4}(y_1,z_4) = 0 \quad  \mbox{by \refe{E0-1.16}} \\ 
 & u_{1,2}(y_1,0) = 0 \quad  \mbox{by \refe{E0-1.9}} \\ 
 & u_{1,2}(0,y_2) = 1 \quad  \mbox{by \refe{E0-1.18}} \\
 & u_{3,4}(0,z_4) = 0 \quad  \mbox{by \refe{E0-1.23}}
\end{align*}
All values are determined.
\item \underline{Case 2: $u_{2,3}(y_2,z_3) = 0$ and $u_{2,4}(y_2,z_4) = -1$}. 
Then
\begin{align*}
 & u_{3,4}(0,z_4) = 1 \quad  \mbox{by \refe{E0-1.23}} \\
 & u_{3,4}(z_3,0) = 0 \quad  \mbox{by \refe{E0-1.20}} \\
 & u_{1,3}(y_1,z_3) = 0 \quad  \mbox{by \refe{E0-1.14}} \\ 
 & u_{1,4}(y_1,z_4) = 0 \quad  \mbox{by \refe{E0-1.15}} \\ 
 & u_{1,2}(y_1,0) = 0 \quad  \mbox{by \refe{E0-1.9}} \\ 
 & u_{1,2}(0,y_2) = 1 \quad  \mbox{by \refe{E0-1.18}} 
\end{align*}
All values are determined.
\end{itemize}

\bigskip

\dep\underline{\bf Analysis with $y'$ and $z$:}
The analyses with $y$ and $z$, and with $y'$ and $z$ are similar.
By \refe{E0-1.9}, we just need to invert the two attributes $1$ and $2$.
Hence a similar reasoning to the previous analysis can be done.
We obtain thus the two cases $1'$ and $2'$ which are deduced from cases $1$ and $2$ just by switching attributes $1$ and $2$:

\begin{itemize}
\item \underline{Case 1':}
\begin{align*}
 & u_{1,3}(y'_1,z_3) = -1  \\
 & u_{1,4}(y'_1,z_4) = 0  \\ 
 & u_{3,4}(z_3,0) = 1  \\ 
 & u_{2,3}(y'_2,z_3) = 0  \\
 & u_{2,4}(y'_2,z_4) = 0  \\ 
 & u_{1,2}(y'_1,0) = 1   \\ 
 & u_{1,2}(0,y'_2) = 0  \\
 & u_{3,4}(0,z_4) = 0 
\end{align*}
\item \underline{Case 2':}
\begin{align*}
 & u_{1,3}(y'_1,z_3) = 0  \\ 
 & u_{1,4}(y'_1,z_4) = -1  \\ 
 & u_{3,4}(0,z_4) = 1 \\ 
 & u_{3,4}(z_3,0) = 0  \\
 & u_{2,3}(y'_2,z_3) = 0  \\ 
 & u_{2,4}(y'_2,z_4) = 0  \\
 & u_{1,2}(y'_1,0) = 1  \\ 
 & u_{1,2}(0,y'_2) = 0  
\end{align*}
\end{itemize}

\bigskip

\dep\underline{\bf Synthesis:}
Cases 1 and 2' are incompatible, and so are cases 2 and 1'.
We have thus the alternative:
\begin{itemize}
\item \underline{Case 1 and 1'}.
Gathering the values of partial utilities, we get
\[ \begin{array}{lllll}
    u_{1,2}(0,y_2)= 1 & \quad & u_{1,4}(y'_1,z_4) = 0 & \quad & u_{1,3}(y'_1,z_3) = -1\\
		u_{2,3}(y_2,z_3) = -1  & &  u_{2,4}(y_2,z_4) = 0
	 \end{array}
\]
As $u_{1,2}(y'_1,y_2) \geq u_{1,2}(0,y_2)=1$, we have $u_{1,2}(y'_1,y_2) = 1$.
Hence
\begin{align*}
  u(y'_1,y_2,z_3,z_4) & = \underbrace{u_{1,2}(y'_1,y_2)}_{=1} + \underbrace{u_{3,4}(z_3,z_4)}_{=1} 
  + \underbrace{u_{1,3}(y'_1,z_3)}_{=-1}  \\ 
	& + \underbrace{u_{1,4}(y'_1,z_4)}_{=0}  + \underbrace{u_{2,3}(y_2,z_3)}_{=-1}  + \underbrace{u_{2,4}(y_2,z_4)}_{=0} \\
	 & = 0
\end{align*}
We obtain a contradiction as $u(y'_1,y_2,z_3,z_4) \geq u(0,0,z_3,z_4) = 1$.

\item \underline{Case 2 and 2'}.
Gathering the values of partial utilities, we get
\[ \begin{array}{lllll}
    u_{1,2}(0,y_2) = 1 & \quad & u_{1,4}(y'_1,z_4) = -1  & \quad &  u_{1,3}(y'_1,z_3) = 0 \\
		u_{2,3}(y_2,z_3) = 0 & &  u_{2,4}(y_2,z_4) = -1
	 \end{array}
\]
As $u_{1,2}(y'_1,y_2) \geq u_{1,2}(0,y_2)=1$, we have $u_{1,2}(y'_1,y_2) = 1$.
Hence
\begin{align*}
  u(y'_1,y_2,z_3,z_4) & = \underbrace{u_{1,2}(y'_1,y_2)}_{=1} + \underbrace{u_{3,4}(z_3,z_4)}_{=1} 
  + \underbrace{u_{1,3}(y'_1,z_3)}_{=0}  \\ 
	& + \underbrace{u_{1,4}(y'_1,z_4)}_{=-1}  + \underbrace{u_{2,3}(y_2,z_3)}_{=0}  + \underbrace{u_{2,4}(y_2,z_4)}_{=-1} \\
	 & = 0
\end{align*}
We obtain a contradiction as $u(y'_1,y_2,z_3,z_4) \geq u(0,0,z_3,z_4) = 1$.
\end{itemize}
A contradiction is raised in all situations.
Hence it is not possible to have $u_{3,4}$ non-zero, knowing that $u_{1,2}$ depends on its two variables.
\end{proof}

\medskip

{\bf Step 3: Case where $u$ has two non-zero terms with overlapping support,
  e.g., $u_{1,2}$ and $u_{1,3}$.} In the last case, term $u_{1,2}$ depends on its
two variables, and there is no non-zero term $u_{i,j}$, with $i,j \not = 1,2$,
that depends on its two variables.

We proceed as in the beginning of Step~2, assuming that
\begin{equation}
  \forall i\in \{1,2\} \ \forall j\in N\setminus \{1,2\} \ \forall x_i\in X_i  \qquad  u_{i,j}(x_i,0)=0 .
\label{E0-1.101}
\end{equation}
Then relations \refe{E0-1.7} through \refe{E0-1.9} also hold in this case.

Assume by contradiction that there exists a non-zero $u_{i,j}$ for some $i\in\{1,2\}$ and $j \in N\setminus\{1,2\}$. 
Wlog, we assume it is $u_{1,3}$.
There exists thus $z_1 \in X_1$ and $z_3 \in X_3$ such that 
\begin{equation}
 u_{1,3}(z_1,z_3) \not=0 .
\label{E0-1.102}
\end{equation}
One can transfer term $u_{i,3}(0,x_3)$, for $i\not= 1,3$, to $u_{1,3}$ (proceeding as in the beginning of Step~2). Hence we can assume that
\begin{equation}
 \forall i\in N\setminus \{1,3\} \ \forall x_3\in X_3 \qquad u_{i,3}(0,x_3) = 0 .
\label{E0-1.101bis}
\end{equation}
\begin{lemma}
If $u_{1,2}$ depends on its two
variables, then $u_{1,3}$ is identically zero.
\label{L0-1.3}
\end{lemma}
\begin{proof}
We set $v(x_1,x_2,x_3) = u(x_1,x_2,x_3,0_{-1,2,3})$. Then
\[ v(x_1,x_2,x_3) = u_{1,2}(x_1,x_2) + u_{1,3}(x_1,x_3) + u_{2,3}(x_2,x_3) .
\]

\dep\underline{\bf Analysis with $y$ and $z$:}
We write thanks to \refe{E0-1.7} and to the monotonicity of $v$
\begin{align*}
 & v(z_1,0,z_3) = u_{1,2}(z_1,0) + u_{1,3}(z_1,z_3) \\
 & \geq v(z_1,0,0) = u_{1,2}(z_1,0)
\end{align*}
Hence $u_{1,3}(z_1,z_3) \geq 0$, which gives by \refe{E0-1.102}
\begin{align}
 & u_{1,3}(z_1,z_3) = 1  \label{E0-1.116} \\
 & u_{1,2}(z_1,0) = 0  \label{E0-1.117}
\end{align}
We have the following basic relations:
\begin{align}
 & v(y_1,0,z_3) = \underbrace{u_{1,2}(y_1,0)}_{=0} + u_{1,3}(y_1,z_3)
  \label{E0-1.109} \\
 & v(y_1,y_2,z_3) = 1 + u_{1,3}(y_1,z_3) + u_{2,3}(y_2,z_3)
  \label{E0-1.110} \\
 & v(z_1,y_2,z_3) = u_{1,2}(z_1,y_2) + u_{1,3}(z_1,z_3) + u_{2,3}(y_2,z_3) 
  \label{E0-1.111} 
\end{align}

\bigskip

\dep\underline{\bf Analysis with compound alternatives:}
We distinguish between two cases:
\begin{itemize}
\item Assume first that $z_1 \geq y_1$.
By \refe{E0-1.8} and \refe{E0-1.9}, we have
\begin{equation}
  u_{1,2}(z_1,y_2)=1 .
	\label{E0-1.122}
\end{equation}
By monotonicity, $v(z_1,y_2,z_3)=1$ (as $v(z_1,0,z_3)=u_{1,2}(z_1,0)+1$ and thus $v(z_1,0,z_3)=1$). 
Hence \refe{E0-1.116} and \refe{E0-1.111} give
\begin{equation}
  u_{2,3}(y_2,z_3) = -1 .
\label{E0-1.120}
\end{equation}
By monotonicity, $v(y_1,y_2,z_3)=1$ (as $v(y_1,y_2,0)=u_{1,2}(y_1,y_2)=1$). 
From \refe{E0-1.110} and previous relation, we have
\begin{equation}
  u_{1,3}(y_1,z_3) = 1 .
\label{E0-1.121}
\end{equation}
\item Assume then that $z_1 < y_1$.
We have $v(y_1,0,z_3)=1$ by monotonicity of $v$ (as $v(z_1,0,z_3)=1$).
Then \refe{E0-1.109} proves that \refe{E0-1.121} holds.
This implies that \refe{E0-1.120} also holds, thanks to \refe{E0-1.110}.

By monotonicity, $v(z_1,y_2,z_3)=1$ (as $v(z_1,0,z_3)=1$). 
Hence \refe{E0-1.111}  and \refe{E0-1.120} show that \refe{E0-1.122} is satisfied.
\end{itemize}
In the two cases, we have proved that relations \refe{E0-1.122}, \refe{E0-1.120} and \refe{E0-1.121} are true.

\medskip

We make the following reasoning.
\begin{itemize}
\item We write
\begin{align*}
 & v(0,y_2,z_3) = u_{1,2}(0,y_2) + u_{1,3}(0,z_3) - 1 \\
 & \geq v(0,y_2,0) = u_{1,2}(0,y_2)
\end{align*}
Therefore $u_{1,3}(0,z_3) \geq 1$. 
We also see that $u_{1,3}(0,z_3) \in \{0,1\}$ as $v(0,0,z_3) = u_{1,3}(0,z_3)$.
Hence
\begin{align}
  & u_{1,3}(0,z_3) = 1 \label{E0-1.123} \\
	& v(0,0,z_3) = 1 \label{E0-1.125}
\end{align}
\item We write
\begin{align*}
 & v(0,y_2,z_3) = u_{1,2}(0,y_2) + u_{1,3}(0,z_3) + u_{2,3}(y_2,z_3) = u_{1,2}(0,y_2) \\
 & \geq v(0,0,z_3) = 1
\end{align*}
Hence
\begin{equation}
  u_{1,2}(0,y_2) = 1 .
\label{E0-1.124} 
\end{equation}
\item We have
\begin{align*}
 & \underbrace{v(0,y'_2,z_3)}_{=1 \ \mbox{\scriptsize by monotonicity and \refe{E0-1.125}}} = 
  \underbrace{u_{1,3}(0,z_3)}_{=1} + u_{2,3}(y'_2,z_3) 
\end{align*}
Hence
\begin{equation}
  u_{2,3}(y'_2,z_3) = 0 .
\label{E0-1.127} 
\end{equation}
\item We have
\begin{align*}
 & \underbrace{v(y'_1,y'_2,z_3)}_{=1 \ \mbox{\scriptsize by monotonicity}} = 
  1 + u_{1,3}(y'_1,z_3) + \underbrace{u_{2,3}(y'_2,z_3)}_{=0 \ \mbox{by \refe{E0-1.127}}} 
\end{align*}
Hence
\begin{equation}
  u_{1,3}(y'_1,z_3) = 0 .
\label{E0-1.128} 
\end{equation}
\item Finally
\begin{align*}
 & v(y'_1,y_2,z_3) = 
  \underbrace{u_{1,2}(y'_1,y_2)}_{=1 \ \mbox{\scriptsize by \refe{E0-1.8} and \refe{E0-1.124}}} 
	+ \underbrace{u_{1,3}(y'_1,z_3)}_{=0 \ \mbox{by \refe{E0-1.128}}} 
	+ \underbrace{u_{2,3}(y_2,z_3)}_{=-1 \ \mbox{by \refe{E0-1.120}}} 
	= 0
\end{align*}
We obtain a contradiction as $v(y'_1,y_2,z_3)=1$ (thanks to monotonicity of $v$, and to \refe{E0-1.125}).
\end{itemize}
A contradiction is raised in all situations.
Hence it is not possible to have $u_{1,3}$ non-zero, knowing that $u_{1,2}$ depends on its two variables.
\end{proof}
Finally, we have proved that if $u_{1,2}$ depends on its two variables, no other term can be non-zero.
This proves that $u$ depends only on two variables.
\end{proof}
In summary, we have proved that the extreme points of $\cP_{k,2}$ are the
2-additive 0-1-valued $k$-ary capacities, and that these capacities have a
support of size at most 2. It follows that any $v\in\cP_{k,2}$ can be written as
a convex combination of 2-additive $k$-ary capacities with support of size at
most 2, which proves Theorem~\ref{th:main}.

\subsection{Expression of the extreme points of the polytope of 2-additive
  $k$-ary capacities} 
We are now in position to determine all vertices of $\cP_{k,2}$, for a fixed $k\in
\NN$. By Theorem~\ref{th:support}, we know that any vertex has a support of at most
two elements, hence w.l.o.g. we can restrict to elements 1 and 2. By
Theorem~\ref{th:2}, finding all
vertices with support $\{1,2\}$ amounts to finding all 0-1 $k$-ary capacities
which are linear combinations of unanimity games $u_x$ with $\supp(x)\subseteq
\{1,2\}$. By analogy with classical simple games, a coalition $x\in L^N$ is
\textit{winning} for $v$ if $v(x)=1$. Minimal winning coalitions are those which
are minimal w.r.t. the order $\leq$ on $L^N$, and therefore they form an
antichain in $L^N$. We show several properties of minimal winning coalitions.
\begin{lemma}\label{lem:simp}
Let $\mu$ be a 0-1-valued $k$-ary capacity.
\begin{enumerate}
\item $x$ is a minimal winning coalition if and only if $m^\mu(x)=1$ and $m^\mu(y)=0$
  for all $y< x$. 
\item $\supp(\mu)\subseteq \{1,2\}$ if and only if its
  minimal winning coalitions have support included in $\{1,2\}$.
\item If $|\supp(\mu)|=2$, there are at most $k+1$ distinct minimal
  winning coalitions.
\item Suppose that $\supp(\mu)\subseteq\{1,2\}$. Denote by
  $x^1,\ldots, x^q$ the minimal winning coalitions of $\mu$, arranged such that
  $x_1^1<x_1^2\cdots<x_1^q$. Then $m^\mu(x^\ell)=1$ for all $\ell=1,\ldots, q$,
  $m^\mu(x^\ell\vee x^{\ell+1})=-1$ for $\ell=1,\ldots, q-1$, and $m^\mu(x)=0$ otherwise. 
\end{enumerate}
\end{lemma}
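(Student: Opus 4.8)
The four items are tightly linked through the inversion between $\mu$ and its Möbius transform $m^\mu$, so the plan is to treat them in order, each time feeding the previous item into the next. The common tool is the pair of formulas \eqref{eq:zeta} and \eqref{eq:mobk} relating $\mu(z)=\sum_{y\le z}m^\mu(y)$ to $m^\mu$.

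For item (i) I would argue straight from \eqref{eq:zeta}. If $x$ is minimal winning, then $\mu(y)=0$ for every $y<x$; since any $w\le y<x$ also satisfies $w<x$, formula \eqref{eq:mobk} expresses $m^\mu(y)$ as a signed sum of such values $\mu(w)=0$, whence $m^\mu(y)=0$ for all $y<x$. Substituting into \eqref{eq:zeta} at $z=x$ then gives $m^\mu(x)=\mu(x)=1$. The converse is the same computation run backwards: $m^\mu(y)=0$ for $y<x$ forces $\mu(z)=0$ for every $z<x$ and $\mu(x)=m^\mu(x)=1$, i.e.\ $x$ is minimal winning.

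Item (ii): the forward implication is immediate, because by (i) every minimal winning $x$ has $m^\mu(x)=1\neq0$, so $\supp(x)\subseteq\supp(\mu)\subseteq\{1,2\}$. For the converse I would first observe that $\mu(z)=1$ iff $z\geq x^\ell$ for some minimal winning coalition $x^\ell$; if all the $x^\ell$ have support in $\{1,2\}$, the test $z\geq x^\ell$ involves only $z_1,z_2$, so $\mu$ does not depend on the remaining coordinates. Then, for any $z$ with $z_j>0$ for some $j\notin\{1,2\}$, I pair in \eqref{eq:mobk} each admissible $y$ having $y_j=z_j$ with the same vector in which $y_j$ is lowered to $z_j-1$: the two $\mu$-values agree while the signs are opposite, so the contributions cancel in pairs and $m^\mu(z)=0$. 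Hence $\supp(\mu)\subseteq\{1,2\}$. Item (iii) is then a short antichain count: by (ii) the minimal winning coalitions sit in the grid $\{0,\ldots,k\}^2$ (coordinates $1,2$, all others zero) and form an antichain for $\leq$; two of them cannot share a first coordinate (they would be comparable), so the first coordinates are $q$ distinct elements of $\{0,\ldots,k\}$, giving $q\leq k+1$.

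Item (iv) is the crux. Ordering the coalitions so that $x_1^1<\cdots<x_1^q$, the antichain property forces $x_2^1>\cdots>x_2^q$, and $x^\ell\vee x^{\ell+1}=(x_1^{\ell+1},x_2^\ell,0,\ldots,0)$. I would prove the identity
\[
\mu=\sum_{\ell=1}^{q}u_{x^\ell}-\sum_{\ell=1}^{q-1}u_{x^\ell\vee x^{\ell+1}}
\]
by evaluating both sides at an arbitrary $z$: since $z\geq x^\ell$ iff $z_1\geq x_1^\ell$ and $z_2\geq x_2^\ell$, monotonicity of the two coordinate sequences makes the set of indices $\ell$ with $z\geq x^\ell$ a contiguous block $[b,a]$ (possibly empty), and the same bookkeeping shows $z\geq x^\ell\vee x^{\ell+1}$ iff $b\leq\ell\leq a-1$. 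The right-hand side therefore equals $\max(0,a-b+1)-\max(0,a-b)$, which is $1$ on the winning region and $0$ off it, matching $\mu(z)$. After checking that the $2q-1$ vectors $x^\ell$ and $x^\ell\vee x^{\ell+1}$ are pairwise distinct (they have distinct first coordinates, except that $x^\ell\vee x^{\ell+1}$ and $x^{\ell+1}$ share a first coordinate but differ in the second), the uniqueness of the decomposition in the unanimity-game basis yields exactly $m^\mu(x^\ell)=1$, $m^\mu(x^\ell\vee x^{\ell+1})=-1$, and $m^\mu=0$ elsewhere.

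The main obstacle is item (iv): one has to anticipate the correct telescoping inclusion–exclusion, in which only the nearest-neighbour joins $x^\ell\vee x^{\ell+1}$ survive and no higher-order joins contribute, and then verify it cleanly through the contiguous-block count; once the identity is established, uniqueness of the Möbius decomposition delivers the stated coefficients immediately.
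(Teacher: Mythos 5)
Your proof is correct and follows essentially the same route as the paper's: Möbius inversion for (i), the distinct-first-coordinate antichain argument for (iii), and for (iv) the verification that the candidate transform sums over a contiguous block of minimal winning coalitions to give $1$ on the winning region and $0$ elsewhere, concluding by uniqueness of the decomposition in the unanimity-game basis. The only divergence is in the converse of (ii), where the paper picks a minimal element of $L^N$ with nonzero Möbius value and support outside $\{1,2\}$ and argues it must be a minimal winning coalition, whereas you cancel terms pairwise in the explicit formula \refe{eq:mobk} after noting that $\mu$ depends only on coordinates $1,2$; both work (yours is arguably tighter), and you also make explicit the distinctness of the $2q-1$ vectors in (iv), which the paper leaves implicit.
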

\begin{proof}
\begin{enumerate}
\item Suppose $m^\mu(x)=1$ and $m^\mu(y)=0$ for all $y< x$. Then clearly $x$
  is a minimal winning coalition. Conversely, suppose first that there exists $y<x$ such
  that $m^\mu(y)\neq 0$, and choose a minimal $y$ with this property. Then
  $\mu(y)\neq 0$, a contradiction. Then, suppose there is no such $y< x$
  but $m^\mu(x)\neq 1$. Then $\mu(x)=m^\mu(x)\neq 1$, again a contradiction.
\item Suppose there exists a minimal winning coalition $x$ such that
  $\supp(x)\not\subseteq \{1,2\}$. Then by (i), the support of $\mu$ is
  not included in $\{1,2\}$.

Conversely, suppose that there exists $x\in L^N$ with $m^\mu(x)\neq 0$ and
$\supp(x)\not\subseteq \{1,2\}$. Choose a minimal such $x$. By
Lemma~\ref{lem:3}, $m^\mu(x)\in\{-1,0,1\}$. Observe that
$m^\mu(x)=-1$ is impossible, because this would yield $\mu(x)=-1$. Then
$m^\mu(x)=1=\mu(x)$, proving by (i) that $x$ is a minimal winning
coalition.
\item Take $x$ being a minimal winning coalition, and suppose w.l.o.g. that
  $\supp(x)\subseteq \{1,2\}$. Observe that any other minimal winning
  coalition $y$ must satisfy $x_1\neq y_1$, otherwise one of the two would not be
  minimal. Hence, there can be at most $k+1$ distinct minimal winning
  coalitions. 
\item By uniqueness of the decomposition, it suffices to check that the
  computation of $\mu$ by $\mu(x)=\sum_{y\leq x}m^\mu(y)$ works. By
  construction, any $x\in L^N$ is greater or equal to a subset of consecutive
  minimal winning coalitions, say, $x^{i+1},x^{i+2},\ldots, x^{i+j}$, so that there
  are $j-1$ pairs $(x^{i+\ell},x^{i+\ell+1})$, $\ell=1,\ldots,j-1$. The result
  follows by the definition of $m^\mu$.    
\end{enumerate}
\end{proof}
The various properties in the Lemma permit to say that the vertices of $\cP_{k,2}$
with support included into $\{1,2\}$ are in bijection with the antichains (which
are of size at most $k+1$) in the
lattice $(k+1)^2$. Moreover, their M\"obius transform is known.

\begin{lemma}\label{lem:anti}
Let $k\in \NN$. Denote by $\kappa(\ell)$ the number of antichains of $\ell$ elements
in the lattice $(k+1)^2$, $\ell=1,\ldots, k+1$. Then
\[
\kappa(\ell) = \binom{k+1}{\ell}^2.
\]
Moreover, the total number of antichains on $(k+1)^2$ is
\[
\sum_{\ell=1}^{k+1}\kappa(\ell) = \binom{2k+2}{k+1}-1.
\]
\end{lemma}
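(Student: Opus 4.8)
The plan is to treat the statement as pure combinatorics of the product order, ignoring the capacity interpretation. Recall that the lattice $(k+1)^2$ is the set $\{0,1,\ldots,k\}^2$ with the componentwise order, so that $x=(x_1,x_2)$ and $y=(y_1,y_2)$ satisfy $x\leq y$ iff $x_1\leq y_1$ and $x_2\leq y_2$. The first thing I would record is a characterization of incomparability: two distinct points are incomparable exactly when one coordinate is strictly larger and the other strictly smaller, i.e. $(x_1-y_1)(x_2-y_2)<0$. Indeed, if two distinct points shared a first (resp. second) coordinate they would be ordered by the other coordinate, hence comparable. Consequently, along any antichain the first coordinates are pairwise distinct and so are the second coordinates; and if the elements are listed so that their first coordinates strictly increase, then their second coordinates must strictly decrease.

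The heart of the argument is a bijection. To an antichain $\{x^1,\ldots,x^\ell\}$ with $x^1_1<\cdots<x^\ell_1$ I associate the pair of sets $\big(\{x^1_1,\ldots,x^\ell_1\},\{x^1_2,\ldots,x^\ell_2\}\big)$, each an $\ell$-element subset of $\{0,1,\ldots,k\}$. This map is clearly injective, and it is surjective: given two $\ell$-subsets $A=\{a_1<\cdots<a_\ell\}$ and $B=\{b_1<\cdots<b_\ell\}$, the requirement that the second coordinates decrease while the first increase forces the pairing $a_i\mapsto b_{\ell+1-i}$, which does yield an antichain and is the only antichain realizing the prescribed coordinate sets. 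Hence antichains of size $\ell$ are in bijection with pairs of $\ell$-subsets of a $(k+1)$-element set, giving $\kappa(\ell)=\binom{k+1}{\ell}^2$.

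For the total count I would invoke the Vandermonde (Cauchy) identity $\sum_{\ell=0}^{n}\binom{n}{\ell}^2=\binom{2n}{n}$ with $n=k+1$, so that $\sum_{\ell=0}^{k+1}\binom{k+1}{\ell}^2=\binom{2k+2}{k+1}$. The term $\ell=0$ corresponds to the empty antichain and contributes $1$; subtracting it yields $\sum_{\ell=1}^{k+1}\kappa(\ell)=\binom{2k+2}{k+1}-1$. I do not expect any serious obstacle: the only point needing care is the uniqueness half of the bijection, namely that the strictly decreasing pairing is the \emph{only} antichain with a prescribed pair of coordinate sets, which is immediate from the incomparability characterization of the first step.
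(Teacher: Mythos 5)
Your proof is correct, but it takes a genuinely different route from the paper's. You establish $\kappa(\ell)=\binom{k+1}{\ell}^2$ by a direct bijection: an antichain of size $\ell$ has pairwise distinct first coordinates and pairwise distinct second coordinates (since two distinct comparable-or-equal coordinates in one slot force comparability), and once the two coordinate sets $A,B\subseteq\{0,\ldots,k\}$ are fixed, the antichain condition forces the unique order-reversing pairing between them. The paper instead counts recursively: it defines $F_\ell(x_1,x_2)$ as the number of $\ell$-element antichains lying ``to the left'' of $(x_1,x_2)$, derives the recursion $F_{\ell+1}(x_1,x_2)=\sum_{y_2}\sum_{y_1}F_\ell(y_1,y_2)$, and proves $F_\ell(x_1,x_2)=\binom{x_2}{\ell}\binom{k-x_1}{\ell}$ by induction using the hockey-stick identity $\sum_{k=0}^m\binom{n+k}{n}=\binom{n+m+1}{n+1}$, finally evaluating at the virtual point $(-1,k+1)$. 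Your bijective argument is shorter and more transparent --- it explains \emph{why} the answer is a perfect square of a binomial coefficient, with no induction and no auxiliary identity --- while the paper's recursive formula $F_\ell$ yields the refined information of how many antichains sit below a given corner, which is not needed for the lemma as stated. Both proofs conclude identically, summing via $\sum_{\ell=0}^{n}\binom{n}{\ell}^2=\binom{2n}{n}$ and discarding the empty antichain; your handling of the $\ell=0$ term is clean and matches the paper's convention.
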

\begin{proof}
Let $x\in(k+1)^2$, with coordinates $(x_1,x_2)$. Considering that the 1st
coordinate axis is on the left, we say that $y$ is on the left of $x$ if
$y_1>x_1$ and $y_2<x_2$. Let us denote by $F_1(x_1,x_2)$
the number of points $y$ to the left of $x$ (i.e., $\{x,y\}$ is an antichain). We obtain
\[
F_1(x_1,x_2) = \sum_{y_2=0}^{x_2-1}\sum_{y_1=x_1+1}^k 1 = x_2(k-x_1).
\]
Note that $\kappa(1) = F_1(-1,k+1)$ since any point in $(k+1)^2$ is to the left
of $(-1,k+1)$.

Define $F_2(x_1,x_2)$ as the number of antichains $\{y,z\}$ to the left of $x$,
with $z$ to the left of $y$, 
i.e., $\{x,y,z\}$ forms an antichain. We obtain
\[
F_2(x_1,x_2) = \sum_{y_2=1}^{x_2-1}\sum_{y_1=x_1+1}^{k-1}F_1(y_1,y_2).
\]
(note that $y_2=0$ and $y_1=k$ are impossible because $z$ is on the left of $y$). Again
remark that $\kappa(2) = F_2(-1,k+1)$. More generally, the number of antichains
of $\ell$ elements to the left of $x$ is
\[
F_\ell(x_1,x_2) =
\sum_{y_2=\ell-1}^{x_2-1}\sum_{y_1=x_1+1}^{k-\ell+1}F_{\ell-1}(y_1,y_2) \qquad
(1\leq\ell\leq k+1),
\]
and $\kappa(\ell) = F_\ell(-1,k+1)$. We show by induction that 
\begin{equation}\label{eq:i1}
F_\ell(x_1,x_2) = \binom{x_2}{\ell}\binom{k-x_1}{\ell}.
\end{equation}
The result has already been verified for $\ell=1$. We assume it is true up to some
integer $1\leq\ell\leq k$ and prove it for $\ell+1$. We have
\begin{align*}
F_{\ell+1}(x_1,x_2) & =
\sum_{y_2=\ell}^{x_2-1}\sum_{y_1=x_1+1}^{k-\ell}F_{\ell}(y_1,y_2)\\
 & = \sum_{y_2=\ell}^{x_2-1}\sum_{y_1=x_1+1}^{k-\ell}
\binom{y_2}{\ell}\binom{k-y_1}{\ell}\\
 & =
\sum_{y_2=\ell}^{x_2-1}\binom{y_2}{\ell}\sum_{y_1=x_1+1}^{k-\ell}\binom{k-y_1}{\ell}\\
 & = \binom{x_2}{\ell+1}\binom{k-x_1}{\ell+1},
\end{align*}
where we have used the fact that (see \cite[\S 0.151]{grry07})
\[
\sum_{k=0}^m\binom{n+k}{n} = \binom{n+m+1}{n+1}.
\]
Hence (\ref{eq:i1}) is proved. It remains to compute the total number of
antichains. Using the fact that (see \cite[\S 0.157]{grry07})
\[
\sum_{k=0}^n\binom{n}{k}^2 = \binom{2n}{n},
\]
we find the desired result.
\end{proof}
Observing that the antichain $\{\0\}$ does not correspond to a normalized
capacity, we obtain directly from Lemma~\ref{lem:anti} and previous
considerations the following result.
\begin{theorem}
Let $k\in \NN$ and consider the polytope $\cP_{k,2}$. The following holds.
\begin{enumerate}
\item For any $i\in N$, the number of vertices with support $\{i\}$ is
  $k$.
\item For any distinct $i,j\in N$, the number of vertices with support included
  in $\{i,j\}$ is $\displaystyle \binom{2k+2}{k+1}-2$.
\item The total number of vertices of $\cP_{k,2}$ is
\[
  \Bigg[\binom{2k+2}{k+1}-2 - 2k\Bigg]\frac{n(n-1)}{2} + kn =
  \Bigg[\binom{2k+2}{k+1}-2\Bigg]\frac{n(n-1)}{2} - kn(n-2).
\]
\end{enumerate}
\end{theorem}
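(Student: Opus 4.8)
The plan is to read all three statements as a single counting exercise built on the bijection between the vertices of $\cP_{k,2}$ and antichains that the preceding results have set up. By Theorem~\ref{th:2} every vertex is a $0$-$1$-valued $2$-additive $k$-ary capacity, and by Theorem~\ref{th:support} its support has size at most two. Lemma~\ref{lem:simp}(i)--(ii) shows that such a vertex is determined by its set of minimal winning coalitions, which forms an antichain whose support is contained in the support of the vertex; conversely, Lemma~\ref{lem:simp}(iv) shows that any such antichain, equipped with the prescribed M\"obius transform, reconstitutes a genuine vertex. The first thing I would do is state this bijection precisely, stressing that the antichain $\{\0\}$ must be discarded, since it would force $\mu\equiv 1$ and hence $\mu(\0)=1$, violating normalization.

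For (i) I would specialise the bijection to antichains supported on a single index $i$, i.e.\ antichains inside the chain $\{0,1,\ldots,k\}$; a chain has only singleton or empty antichains, so the admissible ones are $\{\ell_i\}$ with $\ell\in\{1,\ldots,k\}$ (the value $\ell=0$ being excluded as above), giving exactly $k$ vertices, namely the threshold capacities with $\mu(x)=1\iff x_i\geq\ell$. For (ii) I would invoke Lemma~\ref{lem:anti}, which counts the nonempty antichains of $(k+1)^2$ as $\binom{2k+2}{k+1}-1$; discarding the single antichain $\{\0\}$ leaves $\binom{2k+2}{k+1}-2$ vertices with support included in $\{i,j\}$, exactly the asserted number. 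Here one must confirm that every remaining antichain --- including those living on only one of the two coordinates --- corresponds to a legitimate vertex, so that the count is neither padded nor short.

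For (iii) I would aggregate by inclusion--exclusion on supports. Combining (ii) with (i), the number of vertices with support \emph{exactly} $\{i,j\}$ is $\binom{2k+2}{k+1}-2-2k$, the subtracted $2k$ accounting for the vertices supported on just $i$ or just $j$. Summing over the $\binom{n}{2}=\frac{n(n-1)}{2}$ pairs and adding the $kn$ singleton-support vertices yields
\[
\left[\binom{2k+2}{k+1}-2-2k\right]\frac{n(n-1)}{2}+kn,
\]
and the second displayed form follows from $-2k\cdot\frac{n(n-1)}{2}+kn=-kn(n-2)$. No step poses a real obstacle once the bijection is pinned down; the one genuinely delicate point --- and the part I would double-check --- is keeping ``support included in $\{i,j\}$'' separate from ``support exactly $\{i,j\}$'', so that each singleton-support vertex is counted once, inside the $kn$ term, and never again among the pairs.
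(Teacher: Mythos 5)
Your argument is correct and follows exactly the route the paper takes: the bijection between vertices supported in $\{i,j\}$ and nonempty antichains of $(k+1)^2$ (via Lemma~\ref{lem:simp}), the exclusion of the antichain $\{\0\}$, the count from Lemma~\ref{lem:anti}, and inclusion--exclusion over supports for the total. The paper merely states that the result follows ``directly from Lemma~\ref{lem:anti} and previous considerations''; you have filled in the same computation with the correct bookkeeping of ``support included in'' versus ``support exactly''.
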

 
\subsection{Significance of the main theorem}

We have seen in Section \ref{S2.3} that the decomposition of a GAI model is
not unique in general, and moreover, nothing ensures that the terms of the
decomposition have the same type of monotonicity (see  Example~\ref{Ex1}). 

According to Theorem \ref{th:main}, any monotone $2$-additive discrete GAI model
can be rewritten using only nonnegative and monotone utility terms, which is not
the case of the canonical decomposition (see Example \ref{Ex1}). Hence, using
our decomposition, it is easy to provide a semantics to each utility terms
$u_S$, so that the model can be easily interpreted and displayed to the decision maker.

\medskip

Theorem \ref{th:main} brings also very important benefits during the elicitation
of a GAI model.  It indeed reduces the representation of monotonicity
constraints from exponential to quadratic complexity.  The aim of elicitation is
to construct the parameters of the decision model from preference information.
Classically, preference information consists of a set of pairwise comparisons
among elements in $X$ (for each pair $(x,y)\in X^2$, the decision maker strictly
prefers $x$ to $y$) or an assignment of elements in $X$ to some predefined
ordered categories $C_1,\ldots,C_m$ as in classification problems.  There are
mainly two elicitation paradigms.

The first one is a constraint approach, where each pair $(x,y)$ is transformed
into a linear constraint on the parameters of the GAI model
\citep{grmosl12,bifameza12,lagr13}.  Monotonicity conditions can also be
written as linear constraints.  The GAI model is then identified using Linear
Programming.   The practical identification of the model appears to be rapidly
computationally intractable as the number of attributes and the cardinality of
the attributes grow.  Thanks to Theorem \ref{th:main}, these constraints can be
replaced by monotonicity conditions on each term $u_S$ in the GAI decomposition,
which reduces the number of monotonicity constraints from exponential to quadratic
in the number of criteria.  This permits to solve problems of much larger
size.

Within a constraint approach, robust methods are appealing as they consider all
parameters values fulfilling the previous constraints, rather than arbitrarily
selecting one of these values.  MinMax Regret criterion is a conservative way to
handle the uncertainty on the decision model \citep{bopaposc06}.  The idea is
to set bounds on the worst possible loss one could have by choosing an
alternative, looking at the set of possible parameters values.  It is
interesting to note that the scientific community that developped these
approaches does not enforce monotonicity conditions.  This makes the elicitation
quite complex, as one needs to provide a lot of preference information to obtain
the correct monotonicity conditions.  Most applications in this area consider a
very small size of $\mathcal{S}$ compared to the number of criteria, which is
not always possible in practice.  One would then expect a great benefit of enforcing
monotonicity conditions in the MinMax Regret method.  Here again, Theorem
\ref{th:main} is very helpful as it reduces the number of monotonicity
conditions to a tractable number.

Methods of the second paradigm are statistical.  One can mention as an example
the extension of Logistic Regression to utility models incorporating interaction
among criteria \citep{fachdehu12,falahu14}. Here the preference information
is put into the function to optimize, and the global problem to solve is often a
convex problem under linear constraints, mostly monotonicity conditions. The
number of monotonicity conditions highly influences the efficiency of the
optimization algorithm. This shows again the importance of Theorem
\ref{th:main}.

\section{Conclusion}
We have shown in this paper that it is always possible to write a 2-additive
discrete GAI model as a sum of nonnegative and monotone nondecreasing terms,
thus reducing the complexity of any optimization problem involving such models
from exponential to quadratic complexity in the number of attributes. We believe
that this result opens the way to the practical utilization of GAI models. 

By the equivalence between 2-additive discrete GAI models and 2-additive $k$-ary
capacities, as a by-product of our main result, we have obtained all extreme
points of the polytope of 2-additive $k$-ary capacities, a result which is new,
as far as we know, and which generalizes the results of \cite{micogi06} for
classical 2-additive capacities.

\appendix

\section{M\"obius transform of a $k$-ary capacity}
The result can be easily obtained by using standard results of the theory of
M\"obius functions (see, e.g., \cite{aig79}). Given a finite poset (partially ordered
set) $(P,\leq)$, its \textit{M\"obius function} $\mu:P\times P\rightarrow \RR$
is defined inductively by:
\[
\mu(x,y) = \begin{cases}
  1, & \text{if } x=y\\
  -\sum_{x\leq t<y}\mu(x,t), & \text{if }x<y\\
  0, & \text{otherwise}
  \end{cases}.
\]
Then the solution of the system $f(x)=\sum_{y\leq x}g(y), x\in P$, is given by
\[
g(x) = \sum_{y\leq x}\mu(y,x)f(y) \qquad (x\in P),
\]
and $g$ is called the M\"obius transform (or inverse) of $f$. Note that in the
case of capacities, $(P,\leq)$ is taken as $(2^N,\subseteq )$.

Considering two posets $(P,\leq)$, $(P',\leq')$, and the product poset $(P\times
P',\leq)$ where $\leq$ is the product order, i.e., $(x,y)\leq (x',y')$ if $x\leq
x'$ and $y\leq'y'$, it is easy to show that the M\"obius function on $P\times
P'$ is the product of the M\"obius functions on $P$ and $P'$:
\[
\mu((x,t),(y,z)) = \mu_P(x,y)\mu_{P'}(t,z) \quad (x,y\in P, t,z\in P').
\]
Let us apply this result to $k$-ary capacities. It is easy to see that the
M\"obius function on the chain $\{0,1,\ldots,k\}$ is given by
\begin{equation}\label{eq:mobkf}
\mu_{\{0,1,\ldots,k\}}(x,y) = \begin{cases}
  (-1)^{y-x}, & \text{if } 0\leq y-x\leq 1\\
  0, & \text{otherwise.}
  \end{cases}
\end{equation}
It follows that the M\"obius transform $m^v$ of a $k$-ary capacity $v$ is given
by
\[
m^v(x) = \sum_{y\leq x:x_i-y_i\leq 1 \forall i\in N}(-1)^{\sum_{i\in N}(x_i - y_i)}v(y).
\]

\section{Acknowledgments}
The corresponding author thanks the Agence Nationale de la Recherche for
financial support under contract ANR-13-BSHS1-0010 (DynaMITE).

\bibliographystyle{plainnat}
\bibliography{../BIB/fuzzy,../BIB/grabisch,../BIB/general}

\end{document}